\setlist{itemsep=0em,topsep=.5em,parsep=.2em} 
\newcommand{\lebontau}{6\cdot (2k)^\ell}
\newcommand{\id}{\mathsf{id}}
\newcommand{\CONGEST}{\textsf{CONGEST}}
\newcommand{\F}{{\mathsf{out}}}
\renewcommand{\H}{{\mathsf{in}}}
\newcommand{\bp}{X}
\newcommand{\tp}{Y}
\newtheorem{theorem}{Theorem}
\newtheorem{lemma}{Lemma}
\newtheorem{definition}{Definition}
\newtheorem{fact}{Fact}
\title{Deterministic Even-Cycle Detection in Broadcast CONGEST
\thanks{Research supported in part by the 
European QuantERA project QOPT (ERA-NET Cofund 2022-25),
the French ANR projects DUCAT (ANR-20-CE48-0006),
the French PEPR integrated project EPiQ (ANR-22-PETQ-0007),
and the QuanTEdu-France project (22-CMAS-0001).}}
\author[1]{Pierre Fraigniaud}
\author[1]{Maël Luce}
\author[1]{Frédéric Magniez}
\author[2]{Ioan Todinca}
\affil[1]{
  {Université Paris Cité, CNRS, IRIF},
  {Paris},
  {France}
}
\affil[2]{
  {Université d'Orléans, INSA-Centre Val de Loire, LIFO},
 {Orléans},
  {France}   
}
\date{}
\begin{document}

\maketitle

\begin{abstract}
We show that, for every $k\geq 2$, $C_{2k}$-freeness can be decided in $O(n^{1-\nicefrac{1}{k}})$ rounds in the Broadcast \CONGEST{} model, by a \emph{deterministic} algorithm. This (deterministic) round-complexity is optimal for $k=2$ up to logarithmic factors  thanks to the lower bound for $C_4$-freeness by Drucker et al. [PODC 2014], which holds even for \emph{randomized} algorithms. Moreover it matches the round-complexity of the best known \emph{randomized} algorithms by Censor-Hillel et al. [DISC 2020] for $k\in\{3,4,5\}$, and by Fraigniaud et al. [PODC 2024] for $k\geq 6$. Our algorithm uses parallel BFS-explorations with deterministic selections of the set of paths that are forwarded at each round, in a way similar to what was done for the detection of odd-length cycles, by Korhonen and Rybicki [OPODIS 2017]. However, the key element in the design and analysis of our algorithm is a new combinatorial result bounding the ``local density'' of graphs without $2k$-cycles, which  we believe is interesting on its own.
\end{abstract}

\section{Introduction}\label{sec:intro}

In the context of distributed computing in networks, deciding $H$-freeness for a given (connected) graph~$H$ has attracted a lot of attention in the standard $\CONGEST$ model (see, e.g., the survey~\cite{Censor-Hillel21}). Indeed, this problem is inherently local, and thus the main concern is measuring the amount of information that must be exchanged between the nodes for solving it. Recall that the $\CONGEST$ model~\cite{Peleg2000} assumes $n$ processing nodes connected as an $n$-node graph~$G=(V,E)$. Computation proceeds as a sequence of rounds. During each round, every node can send an $O(\log n)$-bit message to each of its neighbors, receive the messages sent by its neighbors, and perform some individual computation. In the \emph{broadcast} version of the model, which is the one used in this paper, it is required that, at each round, each node sends the same message to all its neighbors. An algorithm decides $H$-freeness if, for every graph~$G$, the following holds: If $G$ contains a subgraph isomorphic to~$H$ then at least one node \emph{rejects}, otherwise  all nodes \emph{accept}. 

For every $k\geq 3$, let $C_k$ denote the $k$-node cycle. It is known that, for every $k\geq 2$, there exists a deterministic algorithm deciding $C_{2k+1}$-freeness in $O(n)$ rounds~\cite{KorhonenR17}, which is optimal up to logarithmic factors. It is however possible to decide the presence of even-size cycles in a sub-linear number of rounds. In particular, there exists a deterministic algorithm deciding $C_4$-freeness in $O(\sqrt{n})$ rounds~\cite{DruckerKO13}, which is optimal up to logarithmic factors, even for randomized algorithms. For every $k>2$, there exist randomized algorithms deciding $C_{2k}$-freeness in $O(n^{1-\nicefrac{1}{k}})$ rounds~\cite{Censor-HillelFG20,FraigniaudLMT24}. The algorithms in~\cite{Censor-HillelFG20}, based on the ``local threshold'' technique, apply to $2\leq k\leq 5$, whereas the algorithms in~\cite{FraigniaudLMT24}, based on the ``global threshold'' technique, apply to all $k\geq 2$. 

All aforementioned randomized algorithms are 1-sided, i.e., if $G$ contains a  $2k$-cycle, then the probability that at least one node rejects is at least~$\nicefrac23$, but if $G$ does not contain a  $2k$-cycle, then the probability that all nodes accept is~1. Of course, the error probability of these algorithms can be made as small as desired by mere repetition. Yet, it may still be the case that $G$ contains a $2k$-cycle that is not detected, even if this occurs with arbitrarily small probability. This raises the question of whether  $C_{2k}$-freeness can be decided by a \emph{deterministic} algorithm (which would thus provide absolute success) without increasing the round complexity. This is the case for all odd cycles of length~$\geq 5$, and for 4-cycles~\cite{DruckerKO13,KorhonenR17}. We show that this holds for all even cycles as well, by establishing the following result. 

\begin{theorem}\label{theo:c2k}
    For every $k\geq 2$, there is a deterministic algorithm solving $C_{2k}$-freeness in $O(n^{1-\nicefrac{1}{k}})$ rounds in the Broadcast $\CONGEST$ model. 
\end{theorem}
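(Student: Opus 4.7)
The plan is to adapt the deterministic parallel-BFS technique of Korhonen and Rybicki~\cite{KorhonenR17} — originally designed for odd cycles — to the even-cycle case, supported by a new \emph{local density} lemma for $C_{2k}$-free graphs.

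Every vertex $v$ simultaneously initiates a depth-$k$ BFS. At each node $u$ one stores, for every BFS traversal passing through $u$, a label $(v, j, \sigma)$ encoding the source $v$, the depth $j \le k$, and a small signature $\sigma$ (for instance, the first edge out of $v$ on the path together with enough information to verify internal vertex-disjointness against another label). A vertex $u$ rejects as soon as it holds two labels from the same source $v$ with depths summing to $2k$ whose signatures certify that the two underlying paths are internally vertex-disjoint. Each round every node broadcasts one label, chosen by a fixed deterministic rule (e.g., smallest source ID first, with lexicographic tie-breaking). Running the BFS for $\tau = O(n^{1-\nicefrac{1}{k}})$ rounds yields the claimed round complexity. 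Soundness is immediate from the semantics of labels: a rejection always certifies a $2k$-cycle, so if $G$ is $C_{2k}$-free then no node rejects.

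The hard direction is completeness. Assume that $G$ contains a $C_{2k}$, viewed as two internally disjoint length-$k$ paths $P_1, P_2$ between antipodal vertices $v_0$ and $v_k$. If no node rejects, I need to derive a contradiction by showing that the labels certifying $P_1$ and $P_2$ from source $v_0$ would have propagated to $v_k$ within the budget and triggered rejection there. The only obstruction is that some intermediate node might have postponed a critical label beyond round $\tau$ under the one-label-per-round constraint. The key step is therefore to bound the number of labels that \emph{compete} for broadcast ahead of a critical one at any intermediate node. This is precisely the role of the local density lemma: in any $C_{2k}$-free graph on $n$ vertices, the number of length-$j$ paths from any fixed vertex that pass through any fixed node is $O(n^{1-\nicefrac{1}{k}})$ for every $1 \le j \le k$. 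Applied to the subgraph induced by the labels processed so far (which is $C_{2k}$-free as long as no rejection has occurred), this lemma shows that no critical label can be delayed past round $\tau$, yielding the desired contradiction.

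The main obstacle is the combinatorial lemma itself. The classical Bondy–Simonovits theorem only provides a \emph{global} bound of $O(n^{1+\nicefrac{1}{k}})$ edges in $C_{2k}$-free graphs; what is needed here is a sharper, \emph{per-vertex, per-depth} counting statement, tailored to the bandwidth budget of the distributed algorithm. I expect to prove it by induction on the depth $j$, combining an extremal counting argument in the spirit of Bondy–Simonovits with a careful analysis of how length-$j$ paths split across the neighborhood of an internal node, leveraging that forbidding $C_{2k}$ forbids two vertex-disjoint $(v,u)$-paths of lengths $j$ and $2k-j$. Once this lemma is in hand, the correctness of the deterministic forwarding rule and the calibration of the round budget follow by a direct counting argument along the BFS schedule.
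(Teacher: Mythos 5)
Your proposed local density lemma --- that in any $C_{2k}$-free graph the number of length-$j$ paths from a fixed vertex $v$ through a fixed node $u$ is $O(n^{1-\nicefrac{1}{k}})$ --- is false. Take the star $K_{1,n-1}$: it is $C_{2k}$-free for every $k$ (it is acyclic), yet for any leaf $v$ all $n-2$ length-$2$ paths from $v$ pass through the center, which already exceeds $\sqrt{n}$ for $k=2$. The same example shows the unweighted count of \emph{sources} whose BFS token reaches a fixed node at depth $j$ can be $\Theta(n)$. The quantity the paper's Theorem~\ref{lem:density} controls is degree-weighted: in a $C_{2k}$-free graph, for every $v$ and every $\ell<k$, one has $\sum_{u \in R_\ell(v)} \deg(u) = O(n)$ --- this does \emph{not} bound $|R_\ell(v)|$ or any raw path count. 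This distinction is precisely why the algorithm must treat light and heavy vertices separately, a split your sketch omits: in the heavy phase only vertices of degree $\geq n^{\nicefrac{1}{k}}$ act as BFS sources, so the degree-sum bound translates into the needed label-count bound $|W(v)| = O(n^{1-\nicefrac{1}{k}})$; light cycles are handled by flooding inside the bounded-degree subgraph, where congestion is controlled directly.

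Two further ingredients are missing. You state no rejection rule triggered by congestion, yet the logic of your completeness argument needs one: in the completeness direction $G$ \emph{does} contain a $2k$-cycle, so a lemma whose hypothesis is $C_{2k}$-freeness yields no contradiction when a critical label is delayed. The correct use of the density theorem is dual to what you describe: it certifies that \emph{rejecting upon congestion} is sound, because too many labels force a $2k$-cycle to exist. Second, even per source a node may receive many internally distinct $\ell$-paths along the BFS; the paper applies Monien's Representative Lemma to filter down to at most $\binom{2k}{\ell}$ paths per source per node, so each label is a concrete $O(k\log n)$-bit path forwarded in $O(k)$ rounds. Your ``small signature $\sigma$'' gestures at compression without a mechanism; absent such filtering the per-source message volume can grow exponentially with depth, and neither the round budget nor the internal-disjointness check survives.
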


Our deterministic algorithm solving $C_{2k}$-freeness in $O(n^{1-\nicefrac{1}{k}})$ rounds is generic, parameterized by~$k$. For $k=2$, i.e., for $C_4$-freeness, our algorithm coincides with the one in~\cite{DruckerKO13}. In fact, our algorithm can be viewed as a generalisation of the latter. As for the algorithms in~\cite{Censor-HillelFG20,FraigniaudLMT24}, they distinguish the case of \emph{light} cycles, i.e., $2k$-cycles containing solely nodes with degree smaller than $n^{\nicefrac{1}{k}}$, from the case of \emph{heavy} cycles, i.e., $2k$-cycles containing at least one node with degree at least $n^{\nicefrac{1}{k}}$. Light cycles can be easily detected deterministically by flooding, in $O(n^{1-\nicefrac{1}{k}})$ rounds~\cite{Censor-HillelFG20,FraigniaudLMT24}, and the main issue is to show that heavy cycles can also be detected deterministically in $O(n^{1-\nicefrac{1}{k}})$ rounds. We show that this is indeed possible. 

Our algorithm (for detecting heavy cycles) relies on combining the Representative Lemma by Monien \cite{monien85}, with a new ``Density Theorem''. In a nutshell, the Representative Lemma can be used to show that, as already observed in~\cite{FraigniaudO19,KorhonenR17}, for each source node~$v$, it is not necessary for intermediate nodes to forward all prefixes of paths with endpoint~$v$ for eventually finding one path  forming a cycle containing~$v$, but forwarding a \emph{constant} number of prefixes suffices. Our density theorem is used to show that, if a node $v$ has to forward $\omega(n^{1-\nicefrac{1}{k}})$  prefixes of paths in total (i.e., corresponding to $\omega(n^{1-\nicefrac{1}{k}})$ source nodes~$v$), then there must exist a $2k$-cycle in the graph~$G$. More precisely, our theorem states the following.  

\begin{theorem}\label{lem:density}
    Let $G=(V,E)$ be an $n$-node graph. For every $v\in V$, and every integer $\ell\geq 1$, let $R_\ell(v)\subseteq V$ be the set of nodes that are reachable from $v$ by a simple path of length exactly~$\ell$. For every integer $k\geq 2$, if there exist $v\in V$ and $\ell\in\{1,\dots,k-1\}$ such that
    \[\sum_{u\in R_\ell(v)}\deg(u) > \lebontau \cdot n,\]
    then there is a $2k$-cycle in $G$.
\end{theorem}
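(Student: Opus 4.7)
Plan. My plan is to derive a $C_{2k}$ from the density hypothesis by extending short paths rooted at $v$ up to length $k$ and then applying pigeonhole on their endpoints.

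For each $u\in R_\ell(v)$, I fix an arbitrary simple $v$-to-$u$ path $P_u$ of length $\ell$. The hypothesis provides more than $\lebontau\cdot n$ pairs $(u,uw)$ with $u\in R_\ell(v)$ and $uw\in E(G)$. I call such a pair \emph{returning} if $w\in P_u$, and \emph{proper} otherwise. Since $|P_u\setminus\{u\}|=\ell$, at most $\ell$ pairs are returning per $u$, hence at most $\ell n$ in total, and more than $\lebontau\cdot n-\ell n$ proper pairs remain, each yielding a simple walk of length $\ell+1$ from $v$.

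I then iterate the extension, appending one edge at a time up to length $k$ and discarding edges that return into the current walk at each step. With a careful accounting, the cumulative additive loss across the $k-\ell$ extension rounds is bounded by $\ell k\cdot n$, which accounts for the $\ell k$ term in $\lebontau$. The surviving family thus consists of more than $6(2k)^\ell\cdot n$ simple $v$-to-$w$ paths of length $k$, distributed across at most $n$ endpoints, so by pigeonhole some $w^*\in V$ is the endpoint of more than $6(2k)^\ell$ such paths. From this family I would extract two internally vertex-disjoint simple $v$-to-$w^*$ paths of length $k$, whose union forms the desired $C_{2k}$. The combinatorial claim behind this extraction is that any collection of more than $(2k)^\ell$ (up to constants) simple $v$-to-$w^*$ paths of length $k$ must contain two that are internally vertex-disjoint; this follows from a recursive Menger-style decomposition, since otherwise all paths share a common cut-vertex and one recurses on the shorter $v$-to-cut and cut-to-$w^*$ subpaths, losing a factor of at most $2k$ per recursion level.

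The principal obstacle lies in this quantitative Menger-style counting. Menger's theorem cleanly provides a cut-vertex whenever two internally disjoint paths fail to exist, but turning this into a tight count of $(2k)^\ell$ on the number of pairwise intersecting simple paths of length $k$ requires a delicate recursive decomposition that tracks depth and endpoint bookkeeping while preserving simplicity of the extended walks. Dovetailing this recursion with the additive loss bookkeeping from the extension phase is what recovers the precise constant $6((2k)^\ell+\ell k)$ appearing in $\lebontau$, and is the technical heart of the theorem.
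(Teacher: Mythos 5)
The central step of your proposal is wrong: the claim that a collection of more than $(2k)^\ell$ (up to constants) simple $v$-to-$w^*$ paths of length $k$ must contain two that are internally vertex-disjoint is false, and no bound depending only on $k$ and $\ell$ can repair it. If $v$'s only neighbor is a ``hub'' vertex $c$, then every $v$-to-$w^*$ path passes through $c$, so no two are internally disjoint; yet the number of such paths can be made arbitrarily large (take $c$ adjacent to $s_1,\dots,s_m$ with each $s_i$ adjacent to $w^*$: this gives $m$ pairwise-intersecting $v$-to-$w^*$ paths of length $3$ and no $C_6$). Your proposed Menger-style recursion does not fix this, because splitting at the cut vertex does not shrink the problem — all $N$ paths still pass through $c$, so the two halves can each still carry $\Theta(N)$ pieces, and there is no base case. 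Two further issues compound this. First, your argument only looks for a $2k$-cycle through $v$, but the theorem promises only that some $2k$-cycle exists somewhere in $G$; under the hypothesis $v$ can have degree $1$ and lie on no cycle, while the cycle sits among the neighbors of $R_\ell(v)$. Second, the extension accounting is not a lower bound: appending an edge to a walk ending at $w$ produces at most $\deg(w)-1$ proper continuations, and $\deg(w)$ can be $1$, so walks can die outright rather than suffer an additive loss of $O(\ell k\,n)$.

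The paper's actual proof takes a structurally different route that avoids all of these traps. It passes to a dense bipartite subgraph $H$ on the edges incident to $R_\ell(v)$ (via Burr's lemma), defines a virtual broadcast of $H$-edges along paths from $R_\ell(v)$ with sets $\F_i(u)$ and $\H_i(u)$, iteratively peels these edge sets to expose a nonempty $\mathsf{Core}_i(u)$ of minimum degree at least $k$ at some vertex $u$ (not $v$), and then assembles a $2k$-cycle from a path $P$ from a core vertex $x$ to $u$, a zig-zag path $P'$ of length $2(k-i)-1$ inside the bipartite core, and a second internally disjoint path $P''$ back to $u$. The high minimum degree of the core and the size bound $|\F_j(y,\cdot)|<(2k)^j$ are what make the greedy choice of disjoint continuations go through, which is exactly the disjointness mechanism your pigeonhole is missing. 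If you want to salvage a paths-from-$v$ approach, you would need to explain where the guaranteed internally disjoint pair comes from; the hub counterexample above shows it cannot come from counting alone.
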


Combining the Representative Lemma by Monien~\cite{monien85} with our density theorem, our algorithm for heavy $2k$-cycles boils down to flooding the network for $k$ steps with path-prefixes originated at all heavy nodes~$v$, under the following simple condition: at every step $\ell\in\{1,\dots,k\}$, if a node $u$ has to forward prefixes coming from more than $\lebontau \cdot n^{1-\nicefrac{1}{k}}$ heavy nodes then $u$ rejects.

If flooding is completed without any rejection then the Representative Lemma guarantees that a $2k$-cycle will be detected if any. Instead, if flooding aborts at some rejecting nodes, then the density theorem guarantees that this rejection is legitimate as there must exist a $2k$-cycle. 

\section{Model and Preliminary Results}\label{sec:prelim}

This section recalls the standard (Broadcast) \CONGEST\/ model, and takes benefit of the 4-cycle detection algorithm in~\cite{DruckerKO13} for introducing the reader with some of the techniques that will be reused throughout the paper. This is particularly the case of the Representative Lemma by Monien~\cite{monien85}, which is the basis for implementing a filtering technique enabling to bound the congestion of cycle-detection algorithms.

\subsection{The Broadcast CONGEST Model}

The \CONGEST\/ model~\cite{Peleg2000} assumes $n\geq 1$ processing nodes connected by a network modeled as an arbitrary $n$-node graph $G=(V,E)$. (All graphs are supposed to be connected and simple, i.e., no multiple edges nor self-loops.) Each node $v$ in $G$ has an identifier $\id(v)$ taken from a polynomial range of identifiers, and thus encoded on $O(\log n)$ bits. Each identifier is unique in the network. Computation proceeds as a sequence of synchronous rounds. All nodes start synchronously, at round~1. At each round, every node $v$ can (1)~send an $O(\log n)$-bit message to each of its neighbors in $G$, i.e., to each node $u\in N(v)$, (2)~receive the messages sent by its neighbors, and (3)~perform some individual computation. No limit is placed on the amount of computation that each node performs at each round. Initially, every node $v$ knows its identifier $\id(v)$ and the size $n$ of the graph it belongs to. No other information about the graph is provided to the nodes. All nodes perform the same algorithm, but the behavior of the nodes may vary along the course of execution of that algorithm, depending on the information acquired by them in each round (including their IDs). 

The Broadcast \CONGEST\/ model is a restricted variant of the \CONGEST\/ model which requires each node to send the \emph{same} $O(\log n)$ bit message to all its neighbors, at every round. (Of course, the messages sent by a same node $v$ at two different rounds may be different.)  

\subsection{Deciding $C_4$-Freeness}

The optimal (deterministic) algorithm for detecting 4-cycles in~\cite{DruckerKO13} can be artificially rewritten as Algorithm~\ref{alg:C4}. Let us say that a node $v$ is \emph{light} if its degree satisfies $\deg(v)\leq \sqrt{2n}$, and is \emph{heavy} otherwise. 

Phase~1 in Algorithm~\ref{alg:C4} is aiming at finding light 4-cycles, i.e., 4-cycles containing only light nodes. The for-loop of Instruction~\ref{for-loop-light-C4} consumes at most $\sqrt{2n}$ rounds, as it involves light nodes only, i.e., the light node $v$ has at most $\sqrt{2n}$ neighbors, and thus at most $\sqrt{2n}$ light neighbors. If a light 4-cycle is detected at Instruction~\ref{light-4-cycle}, then $v$ rejects appropriately. 

Phase~2 is aiming at finding heavy 4-cycles, that is, 4-cycles containing at least one heavy node. The for-loop of Instruction~\ref{for-loop-heavy-C4} also consumes at most $\sqrt{2n}$ rounds, merely because it is performed only if $|\mathsf{heavy}(v)|\leq  \sqrt{2n}$. The main observation is that it is legal for a node $v$ to reject if $|\mathsf{heavy}(v)|> \sqrt{2n}$. This is due to the following simple fact. 

\begin{lemma}\label{lem:densitu-for-4cycles}
    For every $n$-node graph $G=(V,E)$, if there exists $v\in V$ such that the inequality $\sum_{u\in N(v)}\deg (u)>2n$ holds, then $G$ contains a 4-cycle.
\end{lemma}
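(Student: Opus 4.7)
The plan is to argue by contrapositive: assuming no $4$-cycle passes through $v$, I will bound $\sum_{u\in N(v)}\deg(u)$ from above by $2n$, contradicting the hypothesis.

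First I would introduce, for each $u\in N(v)$, the set $S_u = N(u)\setminus\{v\}$ of ``second-step'' nodes reachable from $v$ via~$u$. The key observation is that if two distinct neighbors $u,u'\in N(v)$ share a common element $w\in S_u\cap S_{u'}$, then $v\,u\,w\,u'\,v$ is a $4$-cycle in $G$ (the four nodes are distinct because $u\neq u'$ and $w\notin\{v\}$, while $w\neq u,u'$ since $G$ has no self-loop). Hence, under the assumption that $G$ is $C_4$-free, the sets $\{S_u\}_{u\in N(v)}$ are pairwise disjoint subsets of $V\setminus\{v\}$.

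From the disjointness I would derive the bound
$$\sum_{u\in N(v)} |S_u| \;\leq\; |V\setminus\{v\}| \;=\; n-1.$$
Since $v\in N(u)$ for every $u\in N(v)$, we have $|S_u| = \deg(u)-1$, so the left-hand side equals $\sum_{u\in N(v)}\deg(u) - \deg(v)$. Using $\deg(v)\leq n-1$, this yields
$$\sum_{u\in N(v)} \deg(u) \;\leq\; (n-1) + \deg(v) \;\leq\; 2n-2 \;<\; 2n,$$
contradicting the hypothesis $\sum_{u\in N(v)}\deg(u) > 2n$.

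There is no real obstacle here: the proof is a short double-counting argument, and the constant $2n$ is in fact loose (the sharper threshold $2n-2$ drops out of the computation). I would probably present the argument directly rather than by contradiction, by noting that the number of paths of length~$2$ starting at~$v$ is $\sum_{u\in N(v)}(\deg(u)-1)$, and that whenever this exceeds the number of available endpoints in $V\setminus\{v\}$, two such paths must share their endpoint, giving a $4$-cycle through~$v$.
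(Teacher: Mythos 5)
Your argument is correct; it is the standard double-counting proof: the sets $S_u=N(u)\setminus\{v\}$ are pairwise disjoint in the absence of a $4$-cycle through $v$, and bounding $\sum_u|S_u|\le n-1$ together with $\deg(v)\le n-1$ gives the (slightly sharper) bound $2n-2$. The paper states the lemma without proof, treating it as a ``simple fact,'' and your write-up is exactly the routine argument it has in mind.
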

\begin{proof}
    Let $U=N(v)$ and $W=V\smallsetminus(N(v)\cup\{v\})$.

    If there is $y\in U\cup W$ that has two distinct neighbors $u$ and $u'$ in $U$, then $v,u,y,u'$ form a 4-cycle. Suppose on the contrary that every $y\in U\cup W$ has at most one neighbor in $U$. Then,
    \[
    \begin{split}
        \sum_{u\in U}\deg (u) &= \deg(v)+\sum_{u\in U} \deg_U(u)+\sum_{w\in W}\deg_U(w)\\
        &\leq |U|+|U|+|W|\\
        &\leq 2\times|U\cup W|\leq 2n.
    \end{split}
    \]
\end{proof}

\begin{algorithm}[tb]
\caption{Algorithm executed by every node $v\in V$ for deciding $C_4$-freeness in $G=(V,E)$}
\label{alg:C4}
\begin{algorithmic}[1]
    \State send $(\id(v),\deg(v))$ to all neighbors
    \State \textit{Phase 1: Looking for light 4-cycles}
    \If{$\deg(v)\leq \sqrt{2n}$} 
        \State $\mathsf{light}(v)\gets \{u\in N(v)\mid \deg(u)\leq \sqrt{2n}\}$
        \For{$u\in \mathsf{light}(v)$}\label{for-loop-light-C4}
            \State send $\id (u)$ to all neighbors; 
        \EndFor
        \If{$v$ has received $\id(w)\notin \{\id(v),\id(u),\id(u')\}$ from $u\neq u'$ both in $\mathsf{light}(v)$ } \label{light-4-cycle}
            \State \textbf{output}(reject), and terminate
        \EndIf 
    \EndIf
    \State \textit{Phase 2: Looking for heavy 4-cycles}
    \State $\mathsf{heavy}(v)\gets \{u\in N(v)\mid \deg(u)> \sqrt{2n}\}$
    \If{$|\mathsf{heavy}(v)|> \sqrt{2n}$} \label{inst:threshold-for-C4}
        \State \textbf{output}(reject) \label{inst:reject-legal-for-C4}
    \Else
        \For{$u\in \mathsf{heavy}(v)$}\label{for-loop-heavy-C4}
            \State send $\id (u)$ to all neighbors; 
        \EndFor
        \If{$v$ has received $\id(w)\notin \{\id(v),\id(u),\id(u')\}$ from $u\neq u'$ both in $N(v)$ } 
            \State \textbf{output}(reject)
        \Else 
            \State \textbf{output}(accept)
        \EndIf
    \EndIf
\end{algorithmic}
\end{algorithm}

Thanks to Lemma~\ref{lem:densitu-for-4cycles}, since $|\mathsf{heavy}(v)|> \sqrt{2n}$ implies that $\sum_{u\in N(v)}\deg (u)>2n$, we get that $G$~contains a 4-cycle, and thus it is indeed legal for $v$ to reject at Instruction~\ref{inst:reject-legal-for-C4}.

Our generic algorithm for detecting $2k$-cycles for every $k\geq 2$ follows the general idea of Algorithm~\ref{alg:C4} in the sense that:
\begin{enumerate}
    \item it is also split into two phases, one for light cycles (i.e., containing only nodes with degree smaller than $n^{\nicefrac{1}{k}}$), and one for heavy cycles (i.e., containing at least one node with degree at least $n^{\nicefrac{1}{k}}$), and
    \item it also utilizes a threshold as in Instruction~\ref{inst:threshold-for-C4}, which is tuned depending on~$k$. 
\end{enumerate}
In both phases, paths are broadcast among the nodes in the network. That is, every node $v$ participating in the broadcast sends its identifiers to its neighbors, which concatenate their identifiers, and forward the resulting 1-edge path to their neighbors. After $r$ rounds of such a process, every node $u$ receives a set of paths, each of the form $(\id(w_0),\dots,\id(w_{r-1}))$, concatenates its identifier to each of these paths, and forwards the resulting set of paths, each of the form $(\id(w_0),\dots,\id(w_{r-1}),\id(u))$ to all its neighbors. In itself, such a process would however create huge congestion. Indeed, the number of paths circulating in the network would grow exponentially. Nevertheless, reducing drastically the number of paths can  be achieved thanks to a simple application of the Representative Lemma by Monien~\cite{monien85}, as it was done in, e.g., \cite{FraigniaudO19,KorhonenR17}, which is explained a bit further in the text. Before that, let us quickly cover the study of so-called \emph{light} cycles, for which broadcast works without filtering. 

\subsection{Detection of Light $2k$-Cycles} 
\label{subsubsec:detection-light-cycles}

For every $k\geq 2$, the detection of \emph{light} $2k$-cycles, i.e. of $2k$-cycles containing only nodes of degree at most $n^{\nicefrac{1}{k}}$, can be done in a straightforward manner in $O(n^{1-\nicefrac{1}{k}})$ rounds. It is indeed sufficient to consider the subgraph $G_{light}$ of the input graph $G$ induced by the light nodes of~$G$. The detection proceeds by looking for all $2k$-cycles $G_{light}$ passing through~$v$, for all nodes $v\in V(G_{light})$ in parallel. 

Specifically, the algorithm proceeds by flooding, during $k$ phases. At the first phase (which lasts one round), every $v\in V(G_{light})$ forms the path $P=(v)$ consisting of the single node~$v$, and sends it to all its neighbors in $G_{light}$. At every phase~$p\geq 2$, for every node $v \in V(G_{light})$, and for every simple path $P=(u_1,\dots,u_{p-1})$ received by $v$ during phase $p-1$, if $v\notin \{u_1,\dots,u_{p-1}\}$, then $v$ appends its identifier to $P$ for forming the path $P'=(u_1,\dots,u_{p-1},v)$, which is forwarded to all of $v$'s neighbors in $G_{light}$. 

After $k$ phases, if a node $v$ has received a simple path $P=(u_1,\dots,u_k)$ from a neighbor~$u_k$, and a simple path $P'=(u'_1,\dots,u'_k)$ from a neighbor~$u'_k$, with $u_1=u'_1$, $P\cap P'=\{u_1\}$, and $v\notin P\cup P'$, then $v$ rejects. Otherwise $v$ accepts. We show that this simple flooding algorithm detects light $2k$-cycles in $O(n^{1-\nicefrac{1}{k}})$ rounds. 

\begin{lemma}\label{lem:complexity-light-cycles}
    For every $k\geq 2$, there is a deterministic algorithm running in $O(n^{1-\nicefrac{1}{k}})$ rounds in $n$-node graphs under the Broadcast $\CONGEST$ model  such that, for every graph $G$, if $G$ contains a \emph{light} $2k$-cycle (i.e., a cycle containing only nodes of degree smaller than $n^{\nicefrac{1}{k}}$) then at least one node rejects, otherwise all nodes accept.   
\end{lemma}

\begin{proof}
We analyze the simple flooding algorithm described above. Whenever a node rejects, this is because it has received a simple path $P=(u_1,\dots,u_k)$ from~$u_k$, and a simple path $P'=(u'_1,\dots,u'_k)$ from~$u'_k$, with $u_1=u'_1$, $P\cap P'=\{u_1\}$, and $v\notin P\cup P'$, which implies that $(u_1,\dots,u_k,v,u'_k,\dots,u'_2)$ is a $2k$-cycle, i.e., there is indeed a light $2k$-cycle in~$G$. Conversely, let $(w_1,\dots,w_{2k})$ be a light $2k$-cycle. The two paths $P=(w_1,\dots,w_k)$ and $P'=(w_1,w_{2k},\dots,w_{k+1})$ will be received by node $w_{k+1}$ after $k$ phases, leading node $w_{k+1}$ to reject, as desired. 

For the round-complexity, it is sufficient to notice that, by definition, $G_{light}$ has maximum degree at most $n^{\nicefrac{1}{k}}$. It follows that, for every $v\in V(G_{light})$, and for every $p\in\{0,\dots,k-1\}$, the number of simple paths of length $p$ with one end equal to~$v$ is at most  $n^{\nicefrac{p}{k}}$. As a consequence, the round complexity of the flooding algorithm is 
\[
\sum_{p=0}^{k-1} (p+1)\cdot n^{\nicefrac{p}{k}} \leq k^2 \cdot n^{(k-1)/k} = k^2 \cdot  n^{1-\nicefrac{1}{k}}=O(n^{1-\nicefrac{1}{k}}),
\]
as claimed. 
\end{proof}

The main difficulty is detecting heavy cycles, that is, cycles containing at least one node of degree at least $n^{\nicefrac{1}{k}}$. Among other techniques, one needs \emph{filtered flooding}, explained in the next section.  

\subsection{Filtered Flooding}

\subsubsection{Representative Lemma} 

Monien~\cite{monien85} defines a \emph{representative} of a family of subsets of ground set $[n]=\{1,\dots,n\}$ as follows. 

\begin{definition}\label{def:q-rep}
For every integer $n\geq 1$, every family $\mathcal{A}\subseteq 2^{[n]}$ of subsets of $[n]$, and every $q\in [n]$, a family of sets $\mathcal{B}\subseteq \mathcal{A}$ is a \emph{$q$-representative} of $\mathcal{A}$ if, for every set $X\subseteq [n]$ of size $|X|\leq q$, the following holds:
\[
\exists A\in\mathcal{A} \,\text{ such that } A\cap X = \varnothing 
\;\iff\; 
\exists B \in \mathcal{B} \,\text{ such that } B \cap X = \varnothing.
\]
\end{definition}

Note that it follows directly from the definition that the $q$-representativity property is transitive, which is important as our algorithm for $C_{2k}$-freeness will perform several nested iterations of computing a $q$-representative set.  

\begin{fact}\label{lem:rep_transi}
For every $n\geq 1$, $\mathcal{C}\subseteq\mathcal{B}\subseteq\mathcal{A}\subseteq 2^{[n]}$, and $q\in [n]$, if $\mathcal{B}$ is $q$-representative for $\mathcal{A}$, and $\mathcal{C}$ is $q$-representative for $\mathcal{B}$, then $\mathcal{C}$ is $q$-representative for $\mathcal{A}$.
\end{fact}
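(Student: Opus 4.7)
The plan is essentially to observe that the defining property of being a $q$-representative is a biconditional, and biconditionals chain. Concretely, I would fix an arbitrary subset $X\subseteq [n]$ with $|X|\leq q$, and track the existence of a disjoint set through the two layers $\mathcal{A}\supseteq \mathcal{B}\supseteq\mathcal{C}$.

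First I would note that the hypothesis $\mathcal{C}\subseteq\mathcal{B}\subseteq\mathcal{A}$ ensures $\mathcal{C}\subseteq\mathcal{A}$, so $\mathcal{C}$ is at least a candidate representative of $\mathcal{A}$ in the sense required by Definition~\ref{def:q-rep}. Then, for the fixed $X$, the assumption that $\mathcal{B}$ is $q$-representative for $\mathcal{A}$ yields
\[
\bigl(\exists A\in\mathcal{A}:\, A\cap X=\varnothing\bigr)\iff \bigl(\exists B\in\mathcal{B}:\, B\cap X=\varnothing\bigr),
\]
and the assumption that $\mathcal{C}$ is $q$-representative for $\mathcal{B}$ yields
\[
\bigl(\exists B\in\mathcal{B}:\, B\cap X=\varnothing\bigr)\iff \bigl(\exists C\in\mathcal{C}:\, C\cap X=\varnothing\bigr).
\]
Chaining these two equivalences gives the one required by the definition of $\mathcal{C}$ being $q$-representative for $\mathcal{A}$. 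Since $X$ was arbitrary of size $\leq q$, this settles the claim.

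Honestly, there is no real obstacle: the statement is a direct consequence of the fact that ``$\iff$'' is a transitive relation on propositions, together with the observation that the two applications of the definition are instantiated on the \emph{same} set $X$, so no parameter juggling is needed. The only minor sanity check is the inclusion $\mathcal{C}\subseteq\mathcal{A}$, which is immediate from the assumed chain of inclusions.
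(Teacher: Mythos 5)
Your proof is correct and matches the paper's intent exactly: the paper offers no explicit proof, merely remarking that transitivity ``follows directly from the definition,'' and your argument (fix $X$, chain the two biconditionals, note $\mathcal{C}\subseteq\mathcal{A}$) is precisely the direct verification being alluded to.
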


The following lemma says that if the sets in the family $\mathcal{A}$ have bounded size~$p$, then there exists a $q$-representative of $\mathcal{A}$ with bounded size. 

\begin{lemma}[Monien~\cite{monien85}]\label{lem:erdos}
For every integer $n\geq 1$, every $(p,q)\in [n]\times [n]$ such that $p+q\leq n$, and every family $\mathcal{A}\subseteq 2^{[n]}$ of subsets of $[n]$, if $|A|\leq p$ for all $A\in\mathcal{A}$, then  
there exists a $q$-representative family $\mathcal{B}\subseteq \mathcal{A}$ of $\mathcal{A}$ such that 
\[
|\mathcal{B}|\leq \binom{p+q}{p}.
\]
\end{lemma}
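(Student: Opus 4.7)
The plan is to construct $\mathcal{B}$ by greedy minimization inside $\mathcal{A}$ and then bound its size via Bollobás's classical set-pair inequality. First, I would set $\mathcal{B}_0=\mathcal{A}$, and at each step $i$, if some $B\in\mathcal{B}_i$ is such that $\mathcal{B}_i\setminus\{B\}$ remains $q$-representative of $\mathcal{A}$, remove such a $B$ to form $\mathcal{B}_{i+1}$; otherwise stop and set $\mathcal{B}=\mathcal{B}_i$. By iterating Fact~\ref{lem:rep_transi} along this chain, $\mathcal{B}$ is $q$-representative of $\mathcal{A}$, so only the size bound remains.

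The next step is to extract a distinguishing witness $X_B$ for each $B\in\mathcal{B}$. By construction, $\mathcal{B}\setminus\{B\}$ is \emph{not} $q$-representative of $\mathcal{A}$, so there exists $X_B\subseteq[n]$ with $|X_B|\leq q$ such that some $A\in\mathcal{A}$ is disjoint from $X_B$ while no element of $\mathcal{B}\setminus\{B\}$ is. Since $\mathcal{B}$ itself is $q$-representative of $\mathcal{A}$, the set $B$ must be disjoint from $X_B$, and it is the unique such set in $\mathcal{B}$. Thus the family of pairs $\{(B,X_B):B\in\mathcal{B}\}$ is cross-intersecting in the Bollobás sense: $|B|\leq p$, $|X_B|\leq q$, $B\cap X_B=\emptyset$ for every $B\in\mathcal{B}$, and $B'\cap X_B\neq\emptyset$ whenever $B\neq B'$.

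Finally, I would apply Bollobás's set-pair inequality to conclude $|\mathcal{B}|\leq\binom{p+q}{p}$. A short self-contained derivation goes through a uniformly random permutation $\sigma$ of $[n]$: for each $B\in\mathcal{B}$, let $\mathcal{E}_B$ be the event that $\sigma$ places every element of $B$ before every element of $X_B$. Since $B$ and $X_B$ are disjoint,
\[
\Pr[\mathcal{E}_B]=\binom{|B|+|X_B|}{|B|}^{-1}\geq\binom{p+q}{p}^{-1}.
\]
Moreover, these events are pairwise disjoint: if $\mathcal{E}_B$ and $\mathcal{E}_{B'}$ both occurred for distinct $B,B'\in\mathcal{B}$, then choosing $a\in B'\cap X_B$ and $b\in B\cap X_{B'}$ (both nonempty by cross-intersection) would force $b$ to precede $a$ under $\sigma$ (because $b\in B$ and $a\in X_B$) and simultaneously $a$ to precede $b$ (because $a\in B'$ and $b\in X_{B'}$), a contradiction. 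Summing the disjoint probabilities then yields $|\mathcal{B}|\cdot\binom{p+q}{p}^{-1}\leq 1$, which is the claimed bound.

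The main obstacle is really this last extremal step: the greedy construction and the witness extraction are essentially bookkeeping enforced by the very definition of $q$-representativity, whereas turning the cross-intersection property into the explicit bound $\binom{p+q}{p}$ is the genuine combinatorial content. The random-permutation argument keeps this step short, but it is the only nontrivial ingredient of the proof.
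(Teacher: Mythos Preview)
Your argument is correct. The paper does not give its own proof of this lemma: it is quoted as a result of Monien and used as a black box, so there is nothing to compare against. Your route---greedily pruning $\mathcal{A}$ down to a minimal $q$-representative subfamily, extracting for each surviving $B$ a witness $X_B$ that only $B$ avoids, and then bounding the resulting cross-intersecting system $\{(B,X_B)\}$ via Bollob\'as's set-pair inequality with the random-permutation argument---is a standard and complete derivation of the $\binom{p+q}{p}$ bound. One small remark: since at each pruning step you test $q$-representativity of $\mathcal{A}$ directly, the appeal to Fact~\ref{lem:rep_transi} is superfluous (though harmless); each $\mathcal{B}_i$ is $q$-representative of $\mathcal{A}$ by construction.
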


To get a flavor of this lemma, observe that if $p+q= n$ and $\mathcal{A}=\{A\subseteq [n], |A|=p\}$, then the bound of Lemma~\ref{lem:erdos} is tight. Indeed, $\mathcal{A}$ is then the set of all subsets of $[n]$ of size $p$, meaning that $|\mathcal{A}|=\binom{n}{p}$. Now suppose that $\mathcal{B}\subsetneq\mathcal{A}$, then any $A\in\mathcal{A}\smallsetminus\mathcal{B}$ is such that $A\cap\big([n]\smallsetminus A\big)=\varnothing$ but there is no $B\in\mathcal{B}$ that does not intersect $[n]\smallsetminus A$. The only $q$-representative family of $\mathcal{A}$ is itself.

\subsubsection{Application to Cycle Detection} 
\label{subsubsec:application-filtering}

The Representative Lemma (Lemma~\ref{lem:erdos}) is particularly useful in the context of $2k$-cycle detection for limiting congestion. Indeed, let us assume that one is questioning whether there is a $2k$-cycle in the $n$-node graph $G=(V,E)$ containing a given node~$v\in V$. One way to proceed is to let~$v$ broadcast its identifier for $k$ rounds. That is, at step~$p=0$, $v$ sends $\id(v)$ to all its neighbors. Then, at step~$p=1$, every neighbor $u\in N(v)$ appends its identifier to encode the 1-edge path $(v,u)$, and forwards this path to all its neighbors. More generally, if the flooding process is not filtered, then, at step~$p\in\{1,\dots,k-1\}$, a node $u$ receiving a simple path $P=(v_0,\dots,v_{p-1})$ with $v_0=v$ appends $u$ to $P$ whenever $u\notin \{v_0,\dots,v_{p-1}\}$, and forwards the resulting augmented path to all its neighbors. At the end of step~$k-1$, if a node $u$ has received two simple paths $P=(v_0,\dots,v_{k-1})$ and $P'=(v'_0,\dots,v'_{k-1})$ with $v_0=v'_0=v$, $u\notin P\cup P'$, and $P\cap P'=\{v\}$, then $u$ has detected a $2k$-cycle containing~$v$. 

The absence  of filtering in the above process may result in an exponential increase of the number of paths to be forwarded by an intermediate node~$u$. This can be avoided using Lemma~\ref{lem:erdos} by the following filtering process. Let us assume that, at step~$p\in\{1,\dots,k-1\}$, a node $u$ receives a collection $\mathcal{A}$ of $p$-node simple paths, all with endpoint~$v$. Each path can be viewed as a set of $p$ nodes, i.e., a subset of the $n$-node set $V$ with cardinality~$p$. 
Let $q=2k-p$. Lemma~\ref{lem:erdos} says that there exists $\mathcal{B}\subseteq \mathcal{A}$ of cardinality $|\mathcal{B}|\leq \binom{2k}{p}$ such that, for every simple path 
$X=(u_0,\dots,u_{2k-p-1})$ from $u_0=u$ to $u_{2k-p-1}\in N(v)$, if there exists a path $A\in \mathcal{A}$ that does not intersect with~$X$, i.e., such that $A\cup X$ is a $2k$-cycle containing~$v$, then there exists a path $B\in \mathcal{B}$ that does not intersect with~$X$, i.e., such that $B\cup X$ is also a $2k$-cycle containing~$v$. The filtering process consists for node $u$ to forward the family $\mathcal{B}$, after concatenating itself to every path in it, instead of $\mathcal{A}$. By the filtering technique, we have the following. 

\begin{fact}\label{fact:not-many-paths}
    Each intermediate node $u$ forwards at most $\binom{2k}{p}$ paths of size $p+1$ and of endpoint~$v$ at each round~$p\in\{0,\dots,k-1\}$.
\end{fact}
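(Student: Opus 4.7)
The plan is to derive Fact~\ref{fact:not-many-paths} as a direct corollary of Monien's representative lemma (Lemma~\ref{lem:erdos}). Recall that the filtering rule described just above the fact specifies that, upon receiving at round $p$ the collection $\mathcal{A}$ of $p$-node simple paths of extremity $v$, the node $u$ computes a $q$-representative subfamily $\mathcal{B}\subseteq\mathcal{A}$ with $q:=2k-p$ and forwards only the paths in $\mathcal{B}$ (each extended with its own identifier). Therefore the number of paths of extremity $v$ that $u$ transmits at round $p$ equals $|\mathcal{B}|$, and it suffices to upper bound $|\mathcal{B}|$.

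First, I would identify each $p$-node simple path in $\mathcal{A}$ with the subset of $V$ formed by its nodes, so that (after identifying $V$ with $[n]$) the family $\mathcal{A}$ becomes a subfamily of $2^{[n]}$ in which every set has cardinality exactly $p$. The hypothesis $|A|\leq p$ of Lemma~\ref{lem:erdos} is then satisfied, and for $p\in\{1,\dots,k-1\}$ both $p$ and $q=2k-p$ lie in $[n]$ in the relevant regime $n\geq 2k$ (the small regime $n<2k$ is handled trivially by collecting the whole graph in $O(n)$ rounds). Applying Lemma~\ref{lem:erdos} with these parameters yields the existence of a $q$-representative $\mathcal{B}\subseteq\mathcal{A}$ satisfying
\[
|\mathcal{B}|\;\leq\;\binom{p+q}{p}\;=\;\binom{2k}{p}.
\]
The filtering rule picks one such $\mathcal{B}$ deterministically, for instance the lexicographically smallest, which is fine because local computation is unconstrained in the $\CONGEST$ model.

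Finally, I would obtain the uniform bound $\binom{2k}{p}\leq 2^{2k}$ from the elementary identity $\sum_{i=0}^{2k}\binom{2k}{i}=2^{2k}$, which bounds any individual entry of a row of Pascal's triangle by the row sum. Combining the two inequalities gives exactly the statement of the fact. There is essentially no real obstacle in this proof: the whole argument is a one-line consequence of Lemma~\ref{lem:erdos}, and the only thing to be careful about is the sanity check that the parameters $(p,q)$ lie in $[n]$ so that Lemma~\ref{lem:erdos} applies as stated.
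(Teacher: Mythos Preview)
Your argument is correct and is exactly the approach the paper intends: the fact is stated immediately after the description of the filtering rule as a direct consequence of Lemma~\ref{lem:erdos} with parameters $p$ and $q=2k-p$, and the paper does not even spell out a separate proof. Your additional remarks (choosing $\mathcal{B}$ deterministically, the trivial small-$n$ regime, and the binomial bound via the row sum) are fine and only make the reasoning more explicit.
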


Hence, the number of paths forwarded by a node $u$ is constant for a fixed~$k$. As a consequence, we get the following. 

\begin{fact}\label{fact:cycle-through-one-vertex}
    For every $k\geq 2$, every $n$-node graph $G=(V,E)$, and every $v\in V$, checking whether there is a $2k$-path in $G$ containing~$v$ takes at most $\sum_{p=0}^{k-1}(p+1)\binom{2k}{p}<k2^{2k}$ rounds in the Broadcast \CONGEST\/ model.
\end{fact}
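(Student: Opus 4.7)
The plan is to describe and analyze the filtered BFS from $v$ that is outlined in the surrounding text, and then to derive the round bound by counting messages. The algorithm proceeds in $k-1$ logical steps of forwarding followed by a local check. At logical step $p\in\{1,\dots,k-1\}$, every node $u$ that has accumulated a family $\mathcal{A}$ of $p$-node simple paths of extremity $v$ applies Lemma~\ref{lem:erdos} with parameters $(p,q)=(p,2k-p)$ to compute a $q$-representative $\mathcal{B}\subseteq\mathcal{A}$ of size at most $\binom{p+q}{p}=\binom{2k}{p}$, and broadcasts only the paths of $\mathcal{B}$. At the end, each node that receives two paths $P,P'$ with $P\cap P'=\{v\}$ outputs that a $2k$-cycle through $v$ has been detected; this final check is purely local and costs no further rounds.

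The round count is then immediate from Fact~\ref{fact:not-many-paths}. Since a single Broadcast CONGEST message carries $O(\log n)$ bits, i.e.\ room for exactly one vertex identifier, transmitting one $p$-node path takes $p$ rounds, and transmitting the $\binom{2k}{p}$ paths of $\mathcal{B}$ at logical step $p$ thus takes $p\binom{2k}{p}$ rounds. Summing over $p=1,\dots,k-1$ yields
$$
\sum_{p=1}^{k-1} p\binom{2k}{p}
$$
Broadcast CONGEST rounds, as claimed. The inequalities $p\leq k$ for every $p\leq k-1$ and $\sum_{p=0}^{2k}\binom{2k}{p}=2^{2k}$ then give the final bound $k\,2^{2k}$.

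The main piece of work, and the only real obstacle, is justifying that this filtered flooding misses no $2k$-cycle through~$v$. For that I would invoke Fact~\ref{lem:rep_transi}: the successive $q$-representative filterings performed at the intermediate nodes compose, layer after layer, into a family that is still $q$-representative of the full unfiltered collection of paths that would have reached the checking node. The choice $q=2k-p$ is precisely what guarantees that every possible completion $X$ from $u$ back to a neighbor of $v$ closing a $2k$-cycle has $|X|\leq q$, so the defining property of $q$-representatives (Definition~\ref{def:q-rep}) applies. Once this bookkeeping is settled, the round count itself is essentially a one-line calculation.
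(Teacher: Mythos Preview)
Your proposal is correct and follows exactly the approach the paper intends: the paper states this fact without proof, as an immediate consequence of the filtered-flooding discussion and Fact~\ref{fact:not-many-paths}, and you have simply fleshed out that discussion into an explicit round count plus the correctness argument via transitivity of $q$-representatives. There is nothing to add.
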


\section{Deterministic Even-Cycle Detection}\label{sec:cycle_detect}

This section is dedicated to the proof of Theorem~\ref{theo:c2k} assuming that the density theorem holds (cf. Theorem~\ref{lem:density}). The density theorem will be established in the next section. We start here by describing our algorithm for deciding $C_{2k}$-freeness, and then we proceed to the proof of Theorem~\ref{theo:c2k}. 

\subsection{Algorithm Description}

Algorithm~\ref{alg_C2k} solves the detection of $2k$-cycles.  As Algorithm~\ref{alg:C4}, it is split into two phases, one aiming at detecting light $2k$-cycles, and one aiming at detecting heavy $2k$-cycles, where a $2k$-cycle is light if it contains only nodes of degree smaller than $n^{\nicefrac{1}{k}}$, and it is heavy otherwise. The second phase, that is, the detection of heavy cycles, uses the filtering techniques (see Instruction~\ref{ins:set_P}) based on Lemma~\ref{lem:erdos}, and detailed in Section~\ref{subsubsec:application-filtering}. The detection of light cycles has already been described and analyzed in Section~\ref{subsubsec:detection-light-cycles}, and this section focuses on Phase~2, i.e., the detection of heavy cycles, starting at Instruction~\ref{ins:start-heavy-here}. 

\begin{algorithm}[p]
\caption{Algorithm run by every node $v\in V$ for deciding $C_{2k}$-freeness in $G=(V,E)$}\label{alg_C2k}
\begin{algorithmic}[1]
    \State \textsf{send} $(\id(v),\deg(v))$ to all neighbors \label{ins:basic-first-step}
    \State \textit{Phase 1: Looking for light $2k$-cycles}
    \If{$\deg(v)< n^{\nicefrac{1}{k}}$}\label{ins:light_begin}
        \State \textsf{send} $P=(\id(v))$ to all neighbors \label{ins:light_BFS_init}
        \For{$i=1$ to $k-1$}
            \State \textsf{receive} paths sent by neighbors
            \State let $\mathcal{P}$ be the set of received paths (not containing $v$)\label{ins:apply-filtering-light}
            \For{$P\in\mathcal{P}$}
                \State $P\gets (P,v)$ \label{ins:light_BFS}
                \State \textsf{send} $P$ to neighbors
            \EndFor
        \EndFor
        \State  \textsf{receive} paths sent by neighbors, and let $\mathcal{P}$ be the received set of paths
        \If{$\exists P,P'\in \mathcal{P}$ of same origin $u$ such that $v\notin P\cup P'    \text{ and } P\cap P'=\{u\}$} 
            \State \textsf{output}(reject) and terminate
        \EndIf 
    \EndIf\label{ins:light_end}
    \State \textit{Phase 2: Looking for heavy $2k$-cycles} \label{ins:start-heavy-here}
    \State $\mathsf{heavy}(v)\gets \{u\in N(v) \mid \deg(u)\geq n^{\nicefrac{1}{k}}\}$
    \For{$w\in V$}
        \State \textbf{if} $w\in \mathsf{heavy}(v)$ \textbf{then} $\mathcal{Q}(w,v)\gets \{(\id(w),\id(v))\}$ \textbf{else} $\mathcal{Q}(w,v)\gets \varnothing$ \label{ins:heavy_init}
    \EndFor
        \For{$\ell=1$ to $k-1$}\label{the-for-loop-for-heavy-C2k}
            \State $W(v)\gets\{w\in V\mid \mathcal{Q}(w,v)\neq\varnothing\}$; \label{ins:set_W}
            \If{$|W(v)|> \lebontau \cdot n^{1-\nicefrac{1}{k}}$}\label{ins:heavy_thres}
                \State \textsf{output}(reject)  and terminate
            \Else
                \For{$w\in W(v)$}
                    \State $\mathcal{P}(w,v)\gets$ filtering applied to $\mathcal{Q}(w,v)$ \label{ins:set_P}
                    \For{$P\in \mathcal{P}(w,v)$}
                        \State send $P$ to neighbors \label{ins:heavy_broadcast}
                    \EndFor
                \EndFor
                \For{$w\in V$}
                    \State $\mathcal{Q}(w,v)\gets \{(P,v)\mid P\in \mathcal{P}(w,u) \land u\in N(v) \land v\notin P\}$ \label{ins:set_Q}
                \EndFor
            \EndIf
        \EndFor
        \If{$\exists w\in V, \exists P,P'\in \mathcal{Q}(w,v) \mid P\cup P'=C_{2k}$}\label{ins:heavy_detect}
            \State \textsf{output}(reject)
        \Else
            \State \textsf{output}(accept)
        \EndIf\label{ins:heavy_end}
    \end{algorithmic}
\end{algorithm}

In Algorithm~\ref{alg_C2k}, every node $v$ of a graph $G=(V,E)$ maintains a collection of local variables. The set $\mathsf{heavy}(v)$ contains all heavy neighbors of~$v$ in~$G$, thanks to Instruction~\ref{ins:basic-first-step}. For every $w\in V$, the set $\mathcal{Q}(w,v)$ contains a collection of simple paths with endpoints $v$ and~$w$. At the beginning of $\ell$-th iteration of the for-loop of Instruction~\ref{the-for-loop-for-heavy-C2k}, $\mathcal{Q}(w,v)$ contains paths of length exactly~$\ell$, which are eventually updated at the end of the $\ell$-th iteration (cf. Instruction~\ref{ins:set_Q}) to paths of length~$\ell+1$. At each iteration, the set $W(v)$ is the set of nodes~$w$ such that there is at least one path from $w$ to $v$ in $\mathcal{Q}(w,v)$, i.e., $\mathcal{Q}(w,v)$ is not empty. 

The main point in Algorithm~\ref{alg_C2k} is the test performed at Instruction~\ref{ins:heavy_thres}, which stipulates that if $W(v)$ is too big, i.e., if $v$ is connected to too many (heavy) nodes $w$ by a path of length~$\ell$ at iteration~$\ell$, then $v$ rejects. If $v$ does not reject, then it carries on the flooding of path-prefixes, by applying filtering for preserving the fact that, for each $w\in W(v)$, $|\mathcal{P}(w,v)|\leq \binom{2k}{\ell+1}$ at every iteration (cf. Fact~\ref{fact:not-many-paths}). At the end of each iteration~$\ell$ of the for-loop of Instruction~\ref{the-for-loop-for-heavy-C2k}, node~$v$ appends $\id(v)$ to each path received during that iteration (see Instruction~\ref{ins:set_Q}). That is, node $v$ appends $\id(v)$ to each path $P\in \mathcal{P}(w,u)$ not containing~$v$, for all neighbors $u\in N(v)$, and it resets $\mathcal{Q}(w,v)$ accordingly. 

Finally, if node $v$ has received two paths $P$ and $P'$ of length~$k$, both in $\mathcal{Q}(w,v)$ for some $w\in V$, i.e., both with endpoints $v$ and $w$, such that the concatenation of $P$ and $P'$ forms a $2k$-cycle, then $v$ rejects. 

\subsection{Proof of Theorem~\ref{theo:c2k}}

In absence of the threshold condition at Instruction~\ref{ins:heavy_thres}, Algorithm~\ref{alg_C2k} merely consists of building longer and longer paths between $v$ and some nodes $w\in V$, such that if there exists $v$ and $w$ for which there exists two paths $P$ and $P'$ of length~$k$ between $v$ and $w$ that are internally disjoint, that is if $P\cup P'$ form a $2k$-cycle, $v$ will detect that fact, and reject accordingly. As already discussed before, the filtering of Instruction~\ref{ins:set_P} does not prevent the algorithm from finding such paths, if any. The main issue is that Algorithm~\ref{alg_C2k}  stops at iteration~$\ell$ whenever $|W(v)|> \lebontau \cdot n^{1-\nicefrac{1}{k}}$. Let us show that stopping under this condition is fine, as it implies the existence of a $2k$-cycle.

Let us assume that there exists a node $v$ such that, at iteration $\ell\in\{1,\dots,k-1\}$ of the for-loop of Instruction~\ref{the-for-loop-for-heavy-C2k}, $|W(v)|> \lebontau \cdot n^{1-\nicefrac{1}{k}}$. At iteration~$\ell$, $W(v)$ is the set of \emph{heavy} nodes $w$ such that there is a simple path of length exactly $\ell$ starting at $w$ and ending at~$v$. 
Using the notation of Theorem~\ref{lem:density}, let $R_\ell(v)\subseteq V$ be the set of nodes that are reachable from $v$ by a simple path of length exactly~$\ell$. We have $W(v)\subseteq R_\ell(v)$. It follows that 
\begin{equation}\label{eq:correctness}
\sum_{w\in R_\ell(v)}\deg(w) 
\geq \sum_{w\in W(v)}\deg(w) 
\geq |W(v)|\cdot n^{\nicefrac{1}{k}} 
> \lebontau \cdot n^{1-\nicefrac{1}{k}} \cdot n^{\nicefrac{1}{k}}
= \lebontau \cdot n.
\end{equation}
By Theorem~\ref{lem:density}, $G$ contains a $2k$-cycle, and thus node $v$ rejects rightfully.

It remains to show that the round complexity of Algorithm~\ref{alg_C2k} is $O(n^{1-\nicefrac{1}{k}})$. Phase~1, dedicated to the search of light $2k$-cycles, takes this many rounds, as established in Lemma~\ref{lem:complexity-light-cycles}. Let us show that Phase~2, dedicated to the search of heavy $2k$-cycles, has the same complexity. By Fact~\ref{fact:cycle-through-one-vertex}, for every node $v$ and every heavy node~$w$, the final set $\mathcal{Q}(w,v)$ at iteration $\ell=k-1$ at Instruction~\ref{ins:set_Q} is built after at most $k\cdot 2^{2k}$ rounds. By the threshold condition of Instruction~\ref{ins:heavy_thres}, the number of families $\mathcal{P}(w,v)$ to be transmitted by~$v$ is at most $\lebontau\cdot n^{1-\nicefrac{1}{k}}$. So, in total, Phase~2 of  Algorithm~\ref{alg_C2k}  performs in at most $k \cdot 2^{2k} \cdot \lebontau \cdot n^{1-\nicefrac{1}{k}}$ rounds, that is $O(n^{1-\nicefrac{1}{k}})$ rounds for a fixed~$k$. This completes the proof of Theorem~\ref{theo:c2k} (under the assumption that Theorem~\ref{lem:density}
 holds). \qed

\section{Proof of Density Theorem}

\subsection{General Construction}
This section is dedicated to the proof of Theorem~\ref{lem:density}.
Let $G=(V,E)$ be an $n$-node graph. Let $k\geq 2$, $\ell\in\{1,\dots,k-1\}$, and $v\in V$. Let $R(v)$ be the set of nodes of $G$ that are reachable from $v$ by a simple path of length exactly~$\ell$, and let us assume that 
\[
\sum_{w\in R(v)}\deg(w)>\lebontau\cdot n.
\]
Our goal is to show  that there is a $2k$-cycle in~$G$. In the following, we fix 
\begin{equation}\label{eq:tau}
\tau = \lebontau.
\end{equation}

Let $F$ be the set of edges incident to at least one node in~$R(v)$. Let $F_{int}\subseteq F$ be the set of edges whose both endpoints are in~$R(v)$, and let $F_{ext}=F\smallsetminus F_{int}$.   

\begin{lemma}[S. Burr \cite{burr1982}]\label{lem:de-Burr}
Every $m$-edge graph contains a bipartite subgraph with at least $m/2$ edges.
\end{lemma}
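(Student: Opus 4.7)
The plan is to invoke the classical probabilistic (max-cut) argument, with an optional deterministic variant by local search. Let $G=(V,E)$ be a graph with $m$ edges.

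For the probabilistic approach, I would assign each vertex of $V$ to one of two classes $A$ or $B$ independently and uniformly at random (a fair coin flip per vertex). For every edge $\{u,w\}$, the probability that its two endpoints land in different classes is exactly $1/2$. By linearity of expectation, the expected number of edges with one endpoint in $A$ and the other in $B$ is $m/2$. Hence there exists at least one bipartition $(A,B)$ achieving at least $m/2$ crossing edges, and the subgraph formed by these crossing edges is bipartite with at least $m/2$ edges.

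For an explicit (deterministic) derivation one may instead run the standard local-search procedure: start from any bipartition $(A,B)$; while some vertex $v$ has strictly more neighbors on its own side than on the other, move $v$ across. Each such swap strictly increases the number of crossing edges, so the process terminates in at most $m$ steps. At termination, each $v\in A$ satisfies $d_A(v)\le d_B(v)$, and symmetrically for each $v\in B$. Summing these inequalities and using the identity $2m = 2\,e(A,B) + \sum_{v\in A} d_A(v) + \sum_{v\in B} d_B(v)$ yields $e(A,B)\ge m/2$.

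There is no real obstacle here, as this is a folklore fact and either argument dispatches it in a few lines. The only minor point to verify is termination of the local-search procedure, which is immediate since the number of crossing edges is a strictly increasing integer bounded above by~$m$.
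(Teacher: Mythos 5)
Your proof is correct: both the random-bipartition/expectation argument and the local-search argument are standard and valid, and the degree identity you use at the end is right (it reduces to $m = e(A,B)+e(A)+e(B)$). The paper does not prove this lemma at all—it simply cites Burr—so there is nothing to compare against; either of your two arguments would serve as a self-contained replacement for the citation.
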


Thanks to Lemma~\ref{lem:de-Burr}, there exists a bipartite subgraph of the graph $G[F_{int}]$ induced by $F_{int}$ with at least $|F_{int}|/2$ edges. Let $F'_{int}\subseteq F_{int}$ be the set of edges of this bipartite graph, hence $|F'_{int}|\geq |F_{int}|/2$.

Furthermore, $F_{ext}$ also induces a bipartite subgraph as all of its edges have exactly one end in $R(v)$.

Let $H$ be the bipartite graph defined as the subgraph of $G$ induced by $F'_{int}$ if $|F'_{int}|\geq |F_{ext}|$, or as the subgraph of $G$ induced by $F_{ext}$ otherwise. That is, 
\[
H=\left\{\begin{array}{ll}
G[F'_{int}] & \mbox{if $|F'_{int}|\geq |F_{ext}|$} \\
G[F_{ext}] & \mbox{otherwise.}
\end{array}\right.
\]
Let $(\bp,\tp)$ be a partition of the vertices of $H$, such that \[\bp\subseteq R(v).\] Note that some nodes in $\tp$ may also belong to $R(v)$. $H$ satisfies the following.

\begin{fact}\label{fact:size_edges}
$H=(\bp,\tp,E_H)$ is a bipartite subgraph of $G$ with at least $\frac16\, \tau\, n$ edges.
\end{fact}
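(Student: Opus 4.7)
The plan is to combine a simple double-counting argument on the degree sum over $R(v)$ with the case split defining $H$. First, observe that each edge in $F_{int}$ contributes $2$ to $\sum_{w\in R(v)}\deg(w)$ (both endpoints lie in $R(v)$) whereas each edge in $F_{ext}$ contributes exactly $1$. Together with the hypothesis on the degree sum and the definition \eqref{eq:tau} of $\tau$, this gives
\[
2|F_{int}|+|F_{ext}| \;=\; \sum_{w\in R(v)}\deg(w) \;>\; \tau n.
\]

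Next I would use the inequality $|F'_{int}|\geq |F_{int}|/2$ coming from Burr's lemma (Lemma~\ref{lem:de-Burr}) to rewrite $2|F_{int}|\leq 4|F'_{int}|$, which yields
\[
4|F'_{int}|+|F_{ext}| \;>\; \tau n.
\]
If both $|F'_{int}|\leq \tfrac{1}{6}\tau n$ and $|F_{ext}|\leq \tfrac{1}{6}\tau n$ held simultaneously, the left-hand side would be at most $\tfrac{4}{6}\tau n + \tfrac{1}{6}\tau n = \tfrac{5}{6}\tau n < \tau n$, a contradiction. Hence $\max(|F'_{int}|, |F_{ext}|) > \tfrac{1}{6}\tau n$, and by definition $|E_H|$ equals this maximum, so $|E_H|\geq \tfrac{1}{6}\tau n$.

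Finally, I would check that $H$ is indeed bipartite with a valid partition $(\bp,\tp)$ having $\bp\subseteq R(v)$. In the case $H=G[F'_{int}]$, bipartiteness is built-in from the application of Lemma~\ref{lem:de-Burr}, and both sides of the resulting partition are contained in $R(v)$, so we may designate either side as $\bp$. In the case $H=G[F_{ext}]$, every edge of $F_{ext}$ has exactly one endpoint in $R(v)$ and one endpoint outside $R(v)$, so taking $\bp = V(H)\cap R(v)$ and $\tp = V(H)\setminus R(v)$ exhibits $H$ as bipartite with $\bp\subseteq R(v)$ (as required, while some nodes in $\tp$ may also belong to $R(v)$ in the first case).

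The argument is almost entirely bookkeeping; the only non-trivial ingredient is Burr's lemma, which has already been invoked. No step looks like a serious obstacle, but one has to be mildly careful with the factor-of-two loss from bipartization, which is exactly what dictates the constant $1/6$ in the bound rather than something smaller.
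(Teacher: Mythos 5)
Your proof is correct and follows essentially the same route as the paper: bound the degree sum over $R(v)$ in terms of $|F_{int}|$ and $|F_{ext}|$, invoke Burr's lemma to pass to $F'_{int}$, and use that $|E_H|$ is the larger of $|F'_{int}|$ and $|F_{ext}|$. The only cosmetic differences are that you use the exact identity $\sum_{w\in R(v)}\deg(w)=2|F_{int}|+|F_{ext}|$ where the paper settles for $\leq 2|F|$, and you phrase the last step as a contradiction where the paper writes a direct chain of inequalities ($|F|\leq 3|E_H|$ and $2|F|>\tau n$); neither change affects the substance.
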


\begin{proof}
We have 
\[
|F|= |F_{ext}|+|F_{int}| \leq |F_{ext}|+2|F'_{int}|\leq 3|E_H|.
\]
Since
\[
2 |F| \geq \sum_{w\in R(v)}\deg(w) >\tau \cdot n,
\]
the claim follows.
\end{proof}


To prove the existence of a $2k$-cycle in $G$, we will prove that there exist three simple paths $P$, $P'$, and $P''$ in $G$ such that:
\begin{itemize}
    \item $P$ is of length $\lambda$ for some  $\lambda\in\{1,\dots,\ell\}$ connecting a node $x\in \bp$ to some node $u\in V$.
    \item $P'$ is a path in $H$ of length $2k-2\lambda-1$ connecting $x$  to a node $y\in \tp$. Note that, since $H$ is bipartite, $P'$~alternates between nodes in $\bp$ and nodes in~$\tp$. 
    \item $P''$ is a path of length $\lambda+1$ connecting $y$ to~$u$.
\end{itemize}
Moreover, our construction will guarantee that $P$, $P'$ and $P''$ are internally disjoint, in the sense that $P\cap P'=\{x\}$, $P\cap P''=\{u\}$, and $P'\cap P''=\{y\}$. This is sufficient for establishing Theorem~\ref{lem:density} as $P\cup P'\cup P''$ is a $2k$-cycle in $G$. To exhibit these three paths, let us introduce some notations.

\subsection{The Sets In and Out.}\label{subsec:set_in_out}

For every $u\in V$, and every $i\in\{0,\dots,\ell\}$, let us denote  by $\mathcal{Q}_i(u)$ the set of simple paths of length exactly~$i$ with one endpoint equal to~$u$, and the other endpoint equal to some node in~$\bp$. 

For every $u\in V$, and $i\in\{0,\dots,\ell\}$, we define the two sets $\H_i(u)$ and $\F_i(u)$ as subsets of edges from $H$. Intuitively, the set $\F_i(u)$ can be viewed as a set of edges that $u$ sends to its neighbors at round~$i$ in a (virtual) distributed protocol that broadcasts sets of edges of $H$ throughout the network~$G$, and $\H_i(u)$ can be viewed as a set of edges that $u$ receives from its neighbors at round~$i$ of this protocol. 
Let $E_H(u)$ denote the set of edges incident to $u$ in~$H$. For every $u\in V$, let 
\begin{equation}\label{eq:OUT_init}
    \F_0(u) = \left\{\begin{array}{cl}
    E_H(u) & \mbox{if $u\in \bp$,} \\
    \varnothing & \mbox{otherwise.}
    \end{array}\right.
\end{equation}
That is, $\F_0(u)$ can be viewed as the initialization of the aforementioned (virtual) broadcast protocol, i.e., initially, every node in~$X$ sends its incident edges in $H$ to its neighbors in~$G$. Now, let us define the sets $\H_i(u)$ and $\F_i(u)$ for every $u\in V$ and $i\geq 1$, where, again, $\H_i(u)$ can be viewed as the set of edges received by~$u$ at round~$i$ (which were sent by $u$'s neighbors at round $i-1$), and $\F_i(u)$ can be viewed as the set of edges forwarded by~$u$ at round~$i$. A key point is that $u$ does not forward all the received edges, but a carefully chosen subset of these edges.

Formally, for every $u\in V$ and every $i\in\{1,\dots,\ell\}$, let
\begin{equation}\label{eq:IN_recur}
    \H_i(u) = \bigcup_{\{w\in N(u) \;\mid\; \exists P\in\mathcal{Q}_{i-1}(w),\; (P,u)\in\mathcal{Q}_i(u)\}} \F_{i-1}(w).
\end{equation}
That is, $\H_i(u)$ merges all edges from sets $\F_{i-1}(w)$ for all neighbors $w$ of $u$ such that at least one path in $\mathcal{Q}_{i-1}(w)$ can be extended into a path in~$\mathcal{Q}_i(u)$ (see Figure~\ref{fig:in-set}).

\begin{figure}
    \centerline{\includegraphics[scale=0.5]{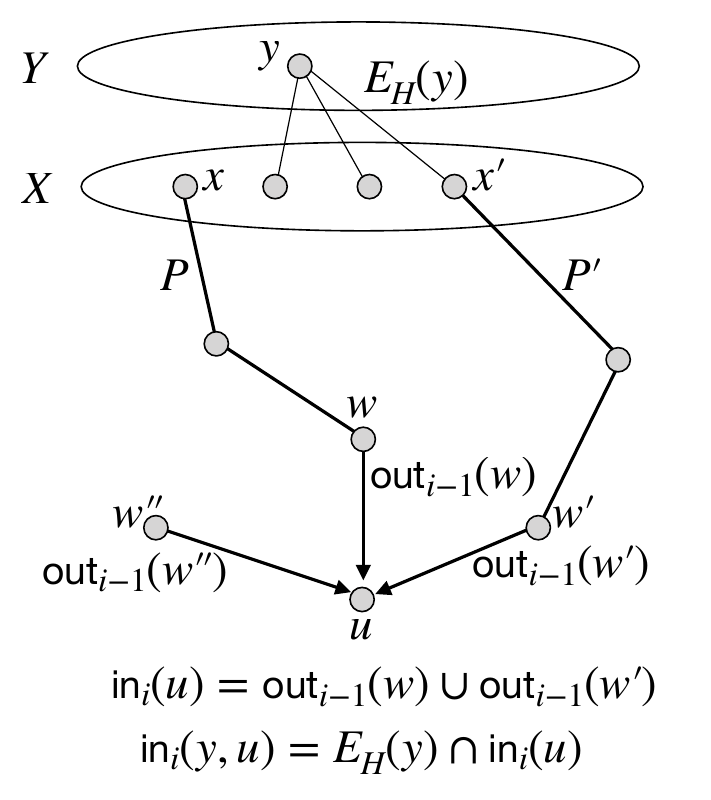}}
    \caption{\sl Construction of the set $\H_i(u)$ from the sets $\F_{i-1}(w)$, $w\in N(u)$. In the figure, $\H_i(u)=\F_{i-1}(w) \cup \F_{i-1}(w')$ because there is a simple path $P$ (resp., $P'$) of length $i-1$ from $w$ (resp., $w'$) to $X$ that can be extended to a simple path of length $i$ from $u$ to $X$. For every $y\in Y$, $\H_i(y,u)=E_H(y)\cap\H_i(u)$. }
    \label{fig:in-set}
\end{figure}

To define $\F_i(u)$, we first define the sets $\H_i(w,u)$ and $\F_i(w,u)$, the subsets of edges respectively in $\H_i(u)$ and $\F_i(u)$ incident to node $w\in  \bp\cup \tp$ (see Figure~\ref{fig:out-set}). For every $i\in\{1,\dots,\ell\}$, for every node $w$ of~$H$, i.e., for every $w\in \bp\cup \tp$, and every $u\in V$, let
\begin{equation}\label{eq:inx-equal-iny}
\H_i(w,u)=E_H(w)\cap\H_i(u). 
\end{equation}
For every vertex $u\in V$, and every $i \in  \{1,\dots,\ell\}$, we construct the edge-set $\F_i(u)$ by defining, for each $y\in \tp$, a subset $\F_i(y,u)$ of $\F_i(u)$ containing edges of $\F_i(u)$ incident to~$y$.
First, for every $y \in \tp$ and $u\in V$, we set 
\[
\F_0(y,u)=E_H(y)\cap\F_0(u)=\left\{\begin{array}{cl}
    \{y,u\} & \mbox{if $\{y,u\}\in E_H$} \\
    \varnothing & \mbox{otherwise}
    \end{array}\right.
\]
For every $u\in V$ and $i\in \{1,\dots,\ell\}$, and for every $y \in \tp$, we set
\begin{equation}\label{eq:OUT_recur}
    |\H_i(y,u)|<(2k)^i \Longrightarrow \F_i(y,u)= \H_i(y,u).
\end{equation}
The definition of $\F_i(y,u)$ when  $|\H_i(y,u)|\geq (2k)^i$ requires more care. 
Let $H_i^{(u)}$ be the subgraph of $H$ induced by all edges in $\cup_{y\in \tp}\H_i(y,u)$ where one keeps only the large sets $\H_i(y,u)$, that is, 
\begin{equation}\label{eq:H_i}
    H_i^{(u)} = G\left[\bigcup_{\{y\in \tp\;:\; |\H_i(y,u)|\geq (2k)^i\}}\H_i(y,u)\right].
\end{equation}
Note that, by construction, every edge $e\in \H_i(u)=\cup_{y\in \tp}\H_i(y,u)$ is either in $H_i^{(u)}$, or in the set $\F_i(y,u)$ where $y\in \tp$ is incident to~$e$.

\begin{figure}
    \centerline{\includegraphics[scale=0.5]{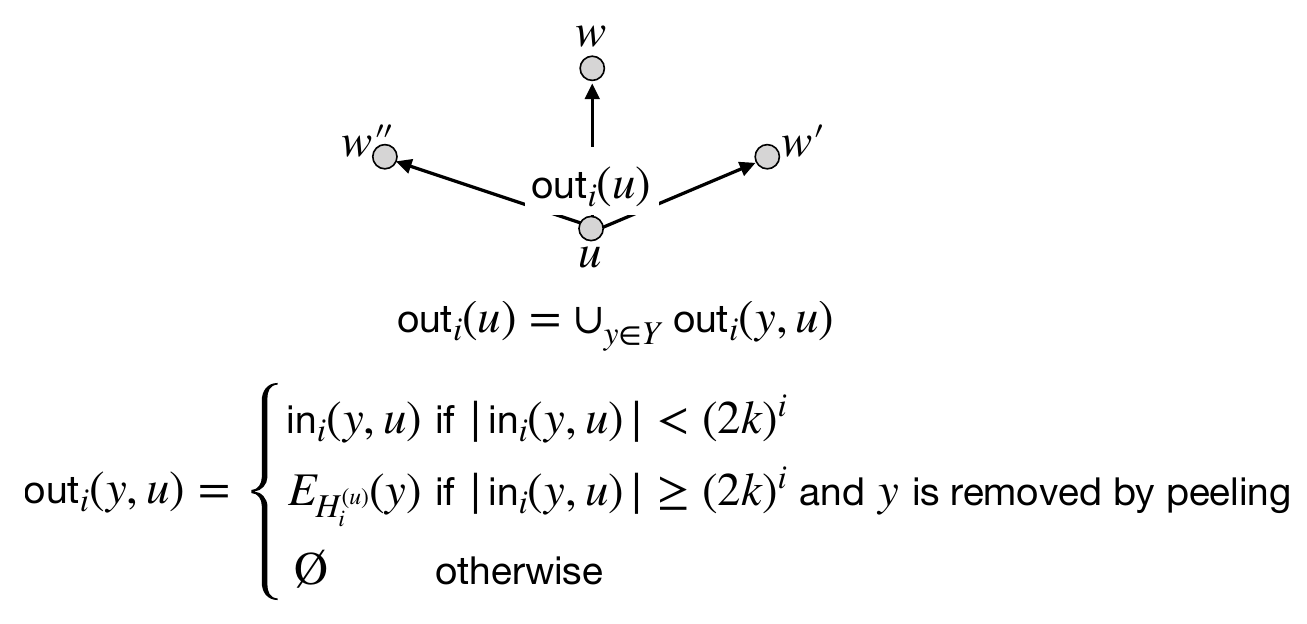}}
    \caption{\sl Construction of the set $\F_i(u)=\cup_{y\in Y}\F_i(y,u)$ from the sets $\H_i(y,u)$, $y\in Y$.}
    \label{fig:out-set}
\end{figure}

We are now going to update $H_i^{(u)}$ iteratively, by repeating the following sequence of ``peeling'' operations as long as they can be applied, i.e., as long as vertices can be removed. This sequence of operations bears similarities with the computation of the $k$-core of a graph, but the vertices of the partitions $\bp$ and $\tp$ of $H$ are here treated separately.  
\begin{center}
\fbox{
\begin{minipage}{15cm}
\centerline{\underline{The peeling process applied to $H_i^{(u)}$}}
\medskip
Repeat until no nodes can be removed: 
\begin{enumerate}
    \item\label{it:discard} Remove from $H_i^{(u)}$ all vertices $x\in \bp$ of degree smaller than~$k$, along with their incident edges in $H_i^{(u)}$, and update the degree of each vertex in $H_i^{(u)}$ accordingly; 
    \item\label{it:out-t} Remove from $H_i^{(u)}$ all vertices $y \in \tp$ of degree smaller than~$k$, along with their incident edges in $H_i^{(u)}$, and update the degree of each vertex in $H_i^{(u)}$ accordingly;
    \item For every $y\in \tp$, we set 
    \begin{equation}\label{eq:OUT_recur-bis}
     \mbox{$y$ removed at Instruction~2} \Longrightarrow  \F_i(y,u)=E_{H_i^{(u)}}(y)\mbox{ (i.e. the edges removed with $y$) };
    \end{equation} 
    That is, $\F_i(y,u)$ is defined as the set of edges incident to $y$ that were removed from $H_i^{(u)}$ together with~$y$ at this iteration. 
\end{enumerate}
\end{minipage}
} 
\end{center}
For any triple $i\geq 1$, $u\in V$, and $y\in \tp$ satisfying $|\H_i(y,u)|\geq (2k)^i$, if the value of $\F_i(y,u)$ has not been set in the above process, then it is set to $\F_i(y,u)=\varnothing$. That is, for every $y\in Y$, 
\[
\mbox{Eqs.~\eqref{eq:OUT_recur} and~\eqref{eq:OUT_recur-bis} do not apply}
\Longrightarrow \F_i(y,u)=\varnothing.
\]
Finally, we set
\begin{equation}\label{eq:OUT}
    \F_i(u) = \bigcup_{y \in \tp} \F_i(y,u).
\end{equation}
Note that, for every $i\geq 1$, $u\in V$, and $y\in \tp$, we have 
\[
\F_i(y,u)=E_H(y)\cap\F_i(u). 
\]
This equality is extended to define, for every triple $i\geq 1$, $u\in V$, and $x\in \bp$,
\[
\F_i(x,u)=E_H(x)\cap \F_i(u).
\]

\subsection{Core Graphs.} The graph resulting from the above iterated process of edge- and vertex-removal from $H_i^{(u)}$ is denoted by $\mathsf{Core}_i(u)$. 
The core graphs play an important role in our proof. Indeed, we will show (see Lemma~\ref{lem:cycle}) that if $\mathsf{Core}_i(u)$ is non-empty for some $i\geq 1$ and $u\in V$, then $G$ contains a $2k$-cycle. The density theorem will then follow from the fact (established in  Lemma~\ref{lem:ker}) that there exists $i$ and $u$ such that $\mathsf{Core}_i(u)$ is non-empty. 

\medbreak

Before proving Lemmas~\ref{lem:cycle} and~\ref{lem:ker}, let us establish a collection of statements for helping understanding the construction above. The fact below illustrates why one can view the sets $\F_i(u)$ and $\H_i(u)$ as produced by a (virtual)  protocol broadcasting edges of $H$ throughout the network~$G$. 

\begin{fact}\label{fact:OUT_to_IN}
    For every simple path $(u_0,\dots,u_\ell)$ in~$G$ with $u_0\in \bp$, we have that, for every $i\in\{1,\dots,\ell\}$, \[\F_{i-1}(u_{i-1})\subseteq\H_i(u_i).\]
\end{fact}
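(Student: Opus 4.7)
The plan is to unfold the definition of $\H_i(u_i)$ given in Equation~\eqref{eq:IN_recur} and exhibit an explicit witness ensuring that the neighbor $w=u_{i-1}$ of $u_i$ contributes its set $\F_{i-1}(u_{i-1})$ to the union. Recall that the union in~\eqref{eq:IN_recur} ranges over those $w\in N(u_i)$ for which there exists $P\in\mathcal{Q}_{i-1}(w)$ such that $(P,u_i)\in\mathcal{Q}_i(u_i)$, so producing such a $P$ for $w=u_{i-1}$ immediately establishes the desired inclusion.

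First, I observe that $u_{i-1}\in N(u_i)$ is automatic, since consecutive vertices of the path $(u_0,\dots,u_\ell)$ are adjacent in $G$. It therefore suffices to exhibit a path $P\in \mathcal{Q}_{i-1}(u_{i-1})$ whose extension by $u_i$ lies in $\mathcal{Q}_i(u_i)$. The natural candidate is the prefix $P=(u_0,u_1,\dots,u_{i-1})$ of the given path. Being a sub-path of a simple path, $P$ is itself simple; it has length exactly $i-1$, one extremity equal to $u_{i-1}$, and the other equal to $u_0\in \bp$. Hence $P\in\mathcal{Q}_{i-1}(u_{i-1})$ by the very definition of $\mathcal{Q}_{i-1}$.

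Next, consider its extension $(P,u_i)=(u_0,u_1,\dots,u_{i-1},u_i)$. Again, this is a sub-path of the simple path $(u_0,\dots,u_\ell)$, so it is simple; it has length exactly $i$, one extremity $u_i$, and the other extremity $u_0\in\bp$. Thus $(P,u_i)\in\mathcal{Q}_i(u_i)$. This shows that $u_{i-1}$ belongs to the index set of the union defining $\H_i(u_i)$ in~\eqref{eq:IN_recur}, and therefore $\F_{i-1}(u_{i-1})\subseteq \H_i(u_i)$, as claimed.

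I do not anticipate any real obstacle: the statement is essentially a tautological unwinding of~\eqref{eq:IN_recur} together with the observation that simplicity is preserved under taking sub-paths. The only minor care is to ensure that the chosen witness $P$ is indeed simple and has the right extremities after extension by $u_i$, which follows directly from the hypothesis that $(u_0,\dots,u_\ell)$ is simple and starts in $\bp$.
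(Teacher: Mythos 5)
Your proof is correct and follows essentially the same route as the paper's: take the prefix $(u_0,\dots,u_{i-1})$ as the witness in $\mathcal{Q}_{i-1}(u_{i-1})$, observe that its extension $(u_0,\dots,u_i)$ lies in $\mathcal{Q}_i(u_i)$, and conclude from Eq.~\eqref{eq:IN_recur}. Your write-up is somewhat more explicit (and uniformly covers $i=1$, which the paper's terse proof states only for $1<i\leq\ell$), but the underlying argument is identical.
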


\begin{proof}
    By definition of set $\mathcal{Q}_i(u_i)$, the set of simple paths of length exactly~$i$ with one endpoint equal to~$u_i$, and the other endpoint equal to some node in~$\bp$, we have that, for every $1<i\leq \ell$,  \[(u_0,\dots,u_{i-1})\in\mathcal{Q}_{i-1}(u_{i-1}), \;\mbox{and}\; (u_0,\dots,u_{i-1},u_i)\in\mathcal{Q}_i(u_i).\]
    Then by definition (Eq.~\eqref{eq:IN_recur}) of $\H_i(u) = \bigcup\limits_{\{w\in N(u) \;\mid\; \exists P\in\mathcal{Q}_{i-1}(w),\; (P,u)\in\mathcal{Q}_i(u)\}} \F_{i-1}(w)$, it follows that $\F_{i-1}(u_{i-1})\subseteq\H_i(u_i)$.
\end{proof}

\begin{lemma}\label{lem:path_to_z}
    For every $i\in\{0,\dots,\ell\}$, $u\in V$, and $e=\{x,y\} \in \H_i(u)$ with $x\in \bp$ and $y\in \tp$, there is a simple path $(u_0,\dots,u_i)$ in $G$ such that $u_0=x$, $u_i=u$, and $e \in \cap_{j=0}^{i-1} \F_j(y,u_j)$.
\end{lemma}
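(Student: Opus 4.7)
The plan is to proceed by induction on $i$. For the base case $i=0$, taking the natural convention $\H_0(u)=\F_0(u)$, definition~\eqref{eq:OUT_init} forces $u\in\bp$, and the incidence of $e$ in $H$ to the unique $\bp$-endpoint $x$ then forces $u=x$. The trivial one-vertex path $(x)$ satisfies the claim, the intersection over the empty index set being vacuous.

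For the inductive step $i\geq 1$, suppose $e=\{x,y\}\in\H_i(u)$. By Eq.~\eqref{eq:IN_recur}, there exists some $w\in N(u)$ with $e\in\F_{i-1}(w)$ together with a witness path $P\in\mathcal{Q}_{i-1}(w)$ whose one-step extension $(P,u)$ lies in $\mathcal{Q}_i(u)$. The construction of the $\F$-sets (Eqs.~\eqref{eq:OUT_recur}--\eqref{eq:OUT}, together with the peeling case) ensures $\F_{i-1}(y',w)\subseteq\H_{i-1}(y',w)$ for every $y'\in\tp$, and hence $\F_{i-1}(w)\subseteq\H_{i-1}(w)$, so $e\in\H_{i-1}(w)$. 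Applying the inductive hypothesis with target vertex $w$ and index $i-1$ yields a simple path $(u_0,\dots,u_{i-1})$ in $G$ with $u_0=x$, $u_{i-1}=w$, and $e\in\bigcap_{j=0}^{i-2}\F_j(y,u_j)$.

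The plan is then to extend by setting $u_i:=u$. The new edge $\{w,u\}$ lies in $G$ since $w\in N(u)$, and the last intersection factor $e\in\F_{i-1}(y,w)=E_H(y)\cap\F_{i-1}(w)$ is satisfied because $y$ is an endpoint of $e$ in $H$ and $e\in\F_{i-1}(w)$. The step I expect to be the main obstacle is ensuring $u\notin\{u_0,\dots,u_{i-1}\}$, so that the extended sequence is a genuine simple path. The plan is to exploit the witness path $(P,u)\in\mathcal{Q}_i(u)$: since $u$ is freshly appended to a length-$(i-1)$ simple path ending at $w$, the inductive path can be rerouted along $P$ near its end so that its tail agrees with the witness; equivalently, one may strengthen the inductive statement so that the path from $x$ to $w$ can be chosen to avoid a prescribed vertex (namely $u$) whenever the degree of freedom provided by condition~\eqref{eq:IN_recur} permits, closing the induction.
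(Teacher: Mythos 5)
You run the induction forward on $i$, whereas the paper runs it backward, from $j=i$ down to $j=1$. Up to the point where you set $u_i := u$, the argument is correct: $\F_{i-1}(w)\subseteq\H_{i-1}(w)$ does hold (by inspection of Eqs.~\eqref{eq:OUT_recur} and~\eqref{eq:OUT_recur-bis} and the remaining empty case), and the new intersection factor $e\in\F_{i-1}(y,w)$ follows exactly as you state. But the gap you yourself flag --- ensuring $u\notin\{u_0,\dots,u_{i-1}\}$ --- is genuine, and neither of your two suggested repairs resolves it. Rerouting the inductive path ``near its end'' along the witness path $P$ destroys the accumulated memberships $e\in\F_j(y,u_j)$ on the rerouted segment, which is exactly what the lemma must certify. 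And the strengthened hypothesis ``the path from $x$ to $w$ can be chosen to avoid a prescribed vertex'' is not self-sustaining as stated: at level $i-1$ the neighbor $w$ is already pinned down by the requirement $e\in\F_{i-1}(w)$ furnished by Eq.~\eqref{eq:IN_recur}, so there is no remaining freedom to simultaneously enforce avoidance of $u$ and preserve the $\F_j$-membership chain; you would have to prove a genuinely new compatibility claim, which you do not.

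The paper sidesteps this by never relinquishing the global path. It fixes $u_i=u$ and at each step $j$ asserts the existence of a single length-$i$ simple path $(P_{j-1},u_j,\dots,u_i)\in\mathcal{Q}_i(u_i)$, with the suffix $(u_j,\dots,u_i)$ frozen and only the prefix $P_{j-1}\in\mathcal{Q}_{j-1}(u_{j-1})$ being replaced at each step, with $e\in\F_{j-1}(u_{j-1})$; membership in $\mathcal{Q}_i(u_i)$ then delivers simplicity by definition, and the final output $(u_0,\dots,u_i)$ is a subpath of this witness, so is automatically simple. That invariant --- carrying a whole simple path in $\mathcal{Q}_i(u_i)$ rather than re-establishing simplicity after each extension --- is the structural ingredient your forward induction is missing. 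As written, your proof is incomplete.
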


To gain intuition about this statement, one can follow the idea of the virtual distributed protocol mentioned since the beginning of Section~\ref{subsec:set_in_out}. This lemma states that any edge $e=(x,y)$ that was received by $u$ during the virtual protocol was first broadcast by $x$ itself and forwarded along a path from $x$ to $u$.

\begin{proof}    
    Let $u_i=u$. We show the following statement by induction on~$j=i$ down to $j=1$. There is a simple path \[(P_{j-1},u_j,\dots,u_i)\in \mathcal{Q}_i(u_i)\] such that $P_{j-1}\in \mathcal{Q}_{j-1}(u_{j-1})$ for some node $u_{j-1}\in N(u_j)$, and $e\in \F_{j-1}(u_{j-1})$.
    
    \begin{itemize}
        \item The base case is $j=i$. By definition of $\H_i(u_i)$ (cf. Eq.~\eqref{eq:IN_recur}), there exists $(P_{i-1},u_i)\in \mathcal{Q}_i(u_i)$ such that $P_{i-1}\in \mathcal{Q}_{i-1}(u_{i-1})$ for some $u_{i-1}\in N(u_i)$, and $e\in\F_{i-1}(u_{i-1})$. Furthermore, by definition of $\mathcal{Q}_i(u_i)$, we have $u_i\notin P_{i-1}$.
    
        \item For the induction step, let us assume that the claim is true for $j+1$ where $j\in\{1,\dots,i-1\}$. By induction, $e\in \F_j(u_j)$, and, by construction of $\F_j(u_j)$, it also holds that $e\in \H_j(u_j)$. This follows directly from Eq.~\eqref{eq:OUT_recur}, or from Eq.~\eqref{eq:OUT_recur-bis} after having noticed that, thanks to Eq.~\eqref{eq:H_i}, $E(H_i^{(u)})\subseteq \H_i(u)$.         
        Using Eq.~\eqref{eq:IN_recur} as in the base case, all points in the claim are satisfied.
    \end{itemize}
    Applying the claim for $j=1$, we get that there exists a path $(P_0,u_1,\dots,u_i)$ such that $P_0\in \mathcal{Q}_0(u_0)$ for some node $u_0\in N(u_1)$, $e\in \bigcap_{j=0}^{i-1}\F_j(u_j)$, and $u_j\notin \{u_0,\dots,u_{j-1}\}$ for all $j\in\{1,\dots,i\}$.
    The latter ensures that the path $(u_0,\dots,u_i)$ is simple. Using the definition of the set $\mathcal{Q}_j$ for $j=0$, it follows that $P_0=(u_0)=(x)$. 
    Moreover, since $e$ is incident to $y$, and since $e\in \bigcap_{j=0}^{i-1}\F_j(u_j)$, we get    
    $e\in \bigcap_{j=0}^{i-1}\F_j(y,u_j)$.
\end{proof}

\begin{fact}\label{fact:deg_F}
    For all $i\in\{1,\dots,\ell\}$ and $u\in V$, every node $y \in \tp \cap \mathsf{Core}_i(u)$ satisfies $|\H_i(y,u)|\geq (2k)^i$.
\end{fact}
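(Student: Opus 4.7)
The plan is to observe that $\mathsf{Core}_i(u)$ inherits all its vertices from $H_i^{(u)}$---the peeling process only discards vertices and their incident edges, never introducing new ones---and then to exploit the bipartite structure of $H$ to pin down exactly which $\tp$-vertices can ever appear in $H_i^{(u)}$. Consequently it will suffice to prove the stronger statement that every $y \in \tp \cap V(H_i^{(u)})$ satisfies $|\H_i(y,u)| \geq (2k)^i$, since the containment $V(\mathsf{Core}_i(u)) \subseteq V(H_i^{(u)})$ transfers this bound to the core.

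For the second step I would simply unfold Eq.~\eqref{eq:H_i}: the edges of $H_i^{(u)}$ form the union $\bigcup_{y' \in \tp \,:\, |\H_i(y',u)| \geq (2k)^i} \H_i(y',u)$, and each $\H_i(y',u) \subseteq E_H(y')$ consists of edges incident to $y'$. Because $H$ is bipartite with parts $\bp$ and $\tp$, every edge of $H_i^{(u)}$ has exactly one endpoint in $\tp$, and that endpoint is precisely the corresponding $y'$ indexing the union. Therefore, if $y \in \tp$ is incident to any edge of $H_i^{(u)}$ at all, then $y$ must coincide with one of those indices $y'$, which forces $|\H_i(y,u)| \geq (2k)^i$.

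No real obstacle arises here; the fact is essentially a direct bookkeeping consequence of the definitions, and the construction of $H_i^{(u)}$ in Eq.~\eqref{eq:H_i} was tailored precisely to enforce this threshold on the $\tp$-side before peeling begins. The only subtlety worth stating explicitly in the write-up is that peeling never adds vertices, so any $\tp$-vertex surviving in $\mathsf{Core}_i(u)$ must already have been in $H_i^{(u)}$ when the process was initialized.
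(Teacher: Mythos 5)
Your proposal is correct and matches the paper's argument, which is the terse observation that the claim follows from Eq.~\eqref{eq:H_i} together with $\mathsf{Core}_i(u)\subseteq H_i^{(u)}$. You simply unfold the same two ingredients in more detail: the peeling process never adds vertices (so the containment holds), and by bipartiteness every $\tp$-vertex of $H_i^{(u)}$ is one of the indices $y'$ in the union of Eq.~\eqref{eq:H_i}, hence meets the threshold.
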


\begin{proof}
    The claim follows directly from Eq.~\eqref{eq:H_i}, and the fact that $\mathsf{Core}_i(u)\subseteq H^{(u)}_i$.
\end{proof}

\begin{fact}\label{fact:deg_ker}
    For all $i\in\{1,\dots,\ell\}$ and $u\in V$, every node $w\in\mathsf{Core}_i(u)$ is of degree at least $k$ in the graph $\mathsf{Core}_i(u)$. 
\end{fact}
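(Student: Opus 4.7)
\medbreak

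\noindent\emph{Proof plan.} The claim should follow essentially by definition of the peeling process, so the plan is simply to unwind that definition. The peeling loop alternates two kinds of removals on $H_i^{(u)}$: in step~\ref{it:discard} every $x\in\bp$ of current degree strictly less than $k$ is deleted (together with its incident edges), and in step~\ref{it:out-t} every $y\in\tp$ of current degree strictly less than $k$ is deleted (together with its incident edges), with degrees updated after each removal. The termination condition is that the loop repeats \emph{until no nodes can be removed}.

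The first step of my write-up will be to observe that the termination condition means exactly that, at the moment the loop exits, no vertex in $\bp$ and no vertex in $\tp$ of the remaining graph has degree less than $k$; otherwise step~\ref{it:discard} or step~\ref{it:out-t} would still apply, contradicting termination. By the very definition of $\mathsf{Core}_i(u)$ as the graph remaining after the peeling process, every $w\in\mathsf{Core}_i(u)$ (whether $w\in \bp$ or $w\in\tp$) therefore has degree at least $k$ in $\mathsf{Core}_i(u)$.

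I do not expect any obstacle here: the argument is a one-line consequence of the stopping rule of the peeling procedure, and the claim is stated precisely so that it records this invariant in a form usable later (in particular, together with Fact~\ref{fact:deg_F}, for the proofs of Lemma~\ref{lem:cycle} and Lemma~\ref{lem:ker}, where minimum-degree-$k$ arguments on $\mathsf{Core}_i(u)$ will be needed to extract long alternating paths between $\bp$ and $\tp$).
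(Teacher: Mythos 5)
Your argument is correct and follows exactly the paper's own one-line reasoning: the claim is an immediate consequence of the peeling loop's termination condition, which guarantees no remaining vertex has degree below $k$. If anything, your write-up is slightly more careful than the paper's proof, which only explicitly mentions the removal of low-degree $\tp$ vertices even though the peeling removes low-degree vertices from both $\bp$ and $\tp$.
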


\begin{proof}
    The claim follows directly from the fact that, in the construction of  $\mathsf{Core}_i(u)$, all nodes $w\in \bp\cup\tp$ with degree smaller than~$k$ are removed (cf. Steps~\ref{it:discard} and~\ref{it:out-t} in the peeling).
\end{proof}

\begin{fact}\label{fact:OUT_size}
    For every $i\in\{1,\dots,\ell\}$, $y\in \tp$, and $u\in V$, we have $|\F_i(y,u)|< (2k)^i$.
\end{fact}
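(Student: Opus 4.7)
The plan is to establish the bound by a short case analysis on which of the three possible definitions of $\F_i(y,u)$ applies in the construction: either Eq.~\eqref{eq:OUT_recur}, or Eq.~\eqref{eq:OUT_recur-bis}, or the fallback assignment $\F_i(y,u)=\varnothing$ when neither of the two applies.

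First, if $|\H_i(y,u)| < (2k)^i$, then Eq.~\eqref{eq:OUT_recur} yields $\F_i(y,u)=\H_i(y,u)$, and the bound is immediate. Second, if $|\H_i(y,u)| \geq (2k)^i$ and $y$ is removed at some iteration of Step~2 of the peeling process applied to $H_i^{(u)}$, then Eq.~\eqref{eq:OUT_recur-bis} sets $\F_i(y,u)=E_{H_i^{(u)}}(y)$ where $E_{H_i^{(u)}}(y)$ denotes the edges incident to $y$ in $H_i^{(u)}$ \emph{at the moment of its removal}. By the very rule governing Step~2, this degree is strictly less than $k$, so $|\F_i(y,u)| < k$; and since $k \geq 2$ and $i \geq 1$, we have $k \leq 2k \leq (2k)^i$, giving the required bound. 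Third, if neither situation applies, then $\F_i(y,u) = \varnothing$ and the bound holds trivially.

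I do not expect any real obstacle: the statement essentially packages the threshold $(2k)^i$ imposed in Eq.~\eqref{eq:OUT_recur} together with the degree-$<k$ criterion that governs when $Y$-vertices are peeled off, and the constant $(2k)^i$ comfortably dominates the looser $k$ bound coming from peeling. The only subtle point to keep in mind when writing the proof is that $E_{H_i^{(u)}}(y)$ in Eq.~\eqref{eq:OUT_recur-bis} has to be read as the edge set of $y$ \emph{at the iteration when $y$ gets removed}, not in the original graph $H_i^{(u)}$, since earlier rounds of Step~1 or Step~2 may already have deleted some of $y$'s incident edges; this is precisely what ensures that the degree-$<k$ rule can be invoked to bound $|\F_i(y,u)|$.
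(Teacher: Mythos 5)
Your proof is correct and follows essentially the same three-way case analysis as the paper (Eq.~\eqref{eq:OUT_recur} directly, Eq.~\eqref{eq:OUT_recur-bis} via the degree-$<k$ peeling rule, and the empty fallback); you are slightly more explicit than the paper in spelling out the chain $|\F_i(y,u)|<k\leq(2k)^i$ and in flagging that $E_{H_i^{(u)}}(y)$ must be read at the moment of removal, but the argument is the same.
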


\begin{proof}
    The claim follows from the construction of set $\F_i(y,u)$. If the set was constructed by Eq.~\eqref{eq:OUT_recur} then $\F_i(y,u)=\H_i(y,u)$, and its size is smaller than $(2k)^i$. If the set $\F_i(y,u)$ was constructed by Eq.~\eqref{eq:OUT_recur-bis} then it contains edges adjacent to $y$ in $H_i^{(u)}$ that were removed, precisely because the current degree of~$y$ was smaller than~$k$. The remaining case is $\F_i(y,u)=\varnothing$ for which the claim holds trivially.  
\end{proof}

\begin{fact}\label{fact:size_IN}
    For every $i\in\{1,\dots,\ell\}$, $u\in V$, and  $x\in \bp$, we have \[|\H_i(x,u)|\leq|\F_i(x,u)|+\deg_{\mathsf{Core}_i(u)}(x)+k-1.\]    
\end{fact}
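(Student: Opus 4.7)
The plan is to partition the edges counted by $|\H_i(x,u)|$ according to what happens to each such edge in the construction of $\F_i(u)$ and $\mathsf{Core}_i(u)$, and then bound each part of the partition separately. Since every edge $e \in \H_i(x,u)$ has the form $e = \{x, y\}$ for some $y \in \tp$, I will first split the analysis according to whether $|\H_i(y,u)| < (2k)^i$ or $|\H_i(y,u)| \geq (2k)^i$, and in the latter case refine the split according to what the peeling process does to~$e$ and its endpoints.

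If $|\H_i(y,u)| < (2k)^i$, then Eq.~\eqref{eq:OUT_recur} gives $\F_i(y,u) = \H_i(y,u) \ni e$, hence $e \in \F_i(u)$ and therefore $e \in \F_i(x,u)$. If $|\H_i(y,u)| \geq (2k)^i$, then by Eq.~\eqref{eq:H_i} the edge $e$ belongs to $H_i^{(u)}$, and both $x$ and $y$ are vertices of $H_i^{(u)}$; I would then distinguish three subcases. In subcase~(i), both endpoints survive the peeling, so $e \in \mathsf{Core}_i(u)$ and contributes to $\deg_{\mathsf{Core}_i(u)}(x)$. In subcase~(ii), $y$ is removed by Instruction~2 at a moment when $e$ is still present in $H_i^{(u)}$, so Eq.~\eqref{eq:OUT_recur-bis} puts $e$ into $\F_i(y,u)$ and hence into $\F_i(x,u)$. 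In subcase~(iii), $x$ is removed by Instruction~1 at a moment when $e$ is still present; such an edge is simply discarded and lies in neither $\F_i(x,u)$ nor $\mathsf{Core}_i(u)$.

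The crux of the argument, and the source of the additive $k-1$, will be bounding the number of edges in subcase~(iii). The key observation is that Instruction~1 removes $x$ only when the current degree of $x$ in $H_i^{(u)}$ is strictly less than~$k$, so at most $k-1$ edges incident to $x$ can still be present at the instant $x$ is removed, yielding at most $k-1$ edges in subcase~(iii). Summing the four contributions will then give $|\H_i(x,u)| \leq |\F_i(x,u)| + \deg_{\mathsf{Core}_i(u)}(x) + (k-1)$, with equality and no $k-1$ slack whenever $x \in \mathsf{Core}_i(u)$ or $x \notin H_i^{(u)}$. I do not foresee any real obstacle: the statement is essentially a conservation identity for the edges of $H$ incident to $x$, with the $k-1$ slack attributable only to the single event in which the peeling removes $x$ itself.
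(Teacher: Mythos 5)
Your proposal is correct and follows essentially the same route as the paper's proof: both classify each edge $e=\{x,y\}\in\H_i(x,u)$ into four mutually exclusive cases (placed in $\F_i(y,u)$ via Eq.~\eqref{eq:OUT_recur}; removed with $x$ in Step~1 of the peeling; removed with $y$ in Step~2 and hence placed in $\F_i(y,u)$ via Eq.~\eqref{eq:OUT_recur-bis}; surviving into $\mathsf{Core}_i(u)$), and then observe that the first and third cases contribute to $|\F_i(x,u)|$, the fourth to $\deg_{\mathsf{Core}_i(u)}(x)$, and the second to at most $k-1$ edges since $x$ is removed only when its current degree drops below~$k$. The only cosmetic difference is that you first split on $|\H_i(y,u)|<(2k)^i$ versus $\geq(2k)^i$ before treating the peeling subcases, whereas the paper lists the four outcomes directly.
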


\begin{proof}
Let us consider an edge $e=\{x,y\}\in \H_i(x,u)$ with $x\in \bp$ and $y\in \tp$. By Eq.~\eqref{eq:inx-equal-iny}, we have $e\in \H_i(y,u)$. The edge $e$~satisfies one and only one of the following four cases. 
    \begin{enumerate}
        \item $e$ joins $\F_i(y,u)$ by applying Eq.~\eqref{eq:OUT_recur};    
        \item $e$ is removed from $H_i^{(u)}$ along with vertex $x\in \bp$ (cf. Step~\ref{it:discard} in the peeling); 
        \item $e$ is removed from $H_i^{(u)}$ along with vertex $y\in \tp$ (cf. Step~\ref{it:out-t} in the peeling) --- the edge $e$ is then added to $\F_i(y,u)$ according to Eq.~\eqref{eq:OUT_recur-bis};
        \item $e$ is never removed from $H_i^{(u)}$ --- in this case, $e$ belongs to $\mathsf{Core}_i(u)$; 
    \end{enumerate}
     In the first and third cases, we have $e\in \F_i(u)$ thanks to Eq.~\eqref{eq:OUT}, and thus $e\in \F_i(x,u)$. The second case applies to at most $k-1$ edges from $\H_i(x,u)$. Finally, at most $\deg_{\mathsf{Core}_i(u)}(x)$ edges satisfy the fourth case.
\end{proof}

We are now ready to establish one of the two main arguments in the proof of our density theorem. 

\begin{lemma}\label{lem:cycle}
    If there exist $i\in \{1,\dots,\ell\}$ and $u\in V$ such that $\mathsf{Core}_i(u)\neq\varnothing$ then there is a $2k$-cycle in~$G$.
\end{lemma}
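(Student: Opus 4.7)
The plan is to exhibit inside $G$ three simple, pairwise internally disjoint paths $P$, $P'$, $P''$ of the form described just before the lemma, with $\lambda = i$: a path $P$ of length $i$ from some $x \in \bp$ to $u$, a path $P'$ of length $2k - 2i - 1$ entirely inside $\mathsf{Core}_i(u)$ from $x$ to some $y \in \tp$, and a path $P''$ of length $i+1$ from $y$ to $u$; their union will then form a $2k$-cycle. Since $\mathsf{Core}_i(u)$ is non-empty and has minimum degree $\geq k \geq 2$ (Fact~\ref{fact:deg_ker}), both $\bp \cap \mathsf{Core}_i(u)$ and $\tp \cap \mathsf{Core}_i(u)$ are non-empty, so I fix any $x \in \bp \cap \mathsf{Core}_i(u)$.

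First I would build $P$ by picking any edge $e_P$ incident to $x$ in $\mathsf{Core}_i(u) \subseteq \H_i(u)$ and invoking Lemma~\ref{lem:path_to_z}, which yields a simple path $P = (x = u_0, u_1, \dots, u_i = u)$ in $G$ with internal vertex set $V_P = \{u_1, \dots, u_{i-1}\}$. Then I would build $P'$ by greedy alternating extension inside $\mathsf{Core}_i(u)$, starting at $x$ and forbidding at each step both the vertices already placed on $P'$ and the set $V_P \cup \{u\}$. A quick count shows that at every intermediate step fewer than $k$ vertices on the opposite side are forbidden (at most $k - i - 1$ from $P'$ itself plus at most $i$ from $V_P \cup \{u\}$), so extension always succeeds by Fact~\ref{fact:deg_ker}. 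After $2k - 2i - 1$ steps $P'$ ends at some $y \in \tp \cap \mathsf{Core}_i(u)$ with $V(P') \cap V_P = \varnothing$ and $u \notin V(P')$.

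The hard part will be building $P''$ of length $i+1$ from $y$ to $u$ whose internal vertices avoid the set $S := V(P) \cup V(P')$ of size $|S| \leq 2k - i$ (the two paths overlap only at $x$). This is where Fact~\ref{fact:deg_F} is essential: $y$ has at least $(2k)^i$ edges in $\H_i(y, u)$. For each candidate $e = \{x', y\} \in \H_i(y, u)$, Lemma~\ref{lem:path_to_z} delivers a simple path $\pi(e) = (x' = u'_0, u'_1, \dots, u'_i = u)$ in $G$ with $e \in \F_j(y, u'_j)$ for every $j \in \{0, \dots, i-1\}$, and prepending $y$ to $\pi(e)$ yields a candidate $P''$. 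I call $e$ \emph{bad} if some internal $u'_j$ falls into $S$. For fixed $w \in S$, having $u'_0 = w$ forces $e = \{w, y\}$ (at most one edge), while for $j \geq 1$, having $u'_j = w$ forces $e \in \F_j(y, w)$, which by Fact~\ref{fact:OUT_size} has size less than $(2k)^j$. Summing over positions and forbidden vertices,
\[
\#\{\text{bad } e\} \;\leq\; |S| \cdot \sum_{j=0}^{i-1} (2k)^j \;=\; (2k - i) \cdot \frac{(2k)^i - 1}{2k - 1} \;<\; (2k)^i
\]
for every $i \geq 1$, so a good $e$ exists and yields $P''$ internally disjoint from $V(P) \cup V(P')$. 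The inequality precisely matches the threshold built into the peeling definition of $\mathsf{Core}_i(u)$.

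The three paths are then simple and pairwise share only the expected endpoints ($x$ between $P$ and $P'$; $y$ between $P'$ and $P''$; $u$ between $P$ and $P''$), so $P \cup P' \cup P''$ is a simple cycle of $G$ of length $i + (2k - 2i - 1) + (i + 1) = 2k$, as required.
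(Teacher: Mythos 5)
Your proof is correct and follows essentially the same approach as the paper's: pick $x \in \bp \cap \mathsf{Core}_i(u)$ and an incident core edge, obtain $P$ via Lemma~\ref{lem:path_to_z}, build the alternating path $P'$ greedily inside $\mathsf{Core}_i(u)$ using Fact~\ref{fact:deg_ker}, and obtain $P''$ by a counting argument that plays $|\H_i(y,u)| \geq (2k)^i$ (Fact~\ref{fact:deg_F}) against $|\F_j(y,w)| < (2k)^j$ (Fact~\ref{fact:OUT_size}). The only cosmetic difference is that you count ``bad'' candidate edges $e\in\H_i(y,u)$ directly, whereas the paper first isolates a single good neighbor $x_{k-i}$ by bounding $|A|+|\bp_{\text{bad}}|$ and then invokes Lemma~\ref{lem:path_to_z} once---the two counts are equivalent and give the same bound $<(2k)^i$.
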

\begin{proof}    
We construct the aforementioned paths $P,P'$, and $P''$ whose union forms a $2k$-cycle (see Figure~\ref{fig:paths-PPP}). Note that, as a subgraph of $\H_i(u)$, which is itself a subgraph of the bipartite graph~$H$, $\mathsf{Core}_i(u)$ is also bipartite. Its two parts are merely $\bp\cap V(\mathsf{Core}_i(u))$ and $\tp\cap V(\mathsf{Core}_i(u))$. Let $e=\{x,y\}$ be an edge of $\mathsf{Core}_i(u)$ such that $x\in \bp$ and $y\in \tp$. By Lemma~\ref{lem:path_to_z}, there exists a simple path \[P=(u_0,\dots,u_i)\] such that $u_0=x$, $u_i=u$, and
\[
\{x,y\}\in \bigcap_{j=0}^{i-1} \F_j(y,u_j).
\]

\begin{figure}[htb]
    \centerline{\includegraphics[scale=0.5]{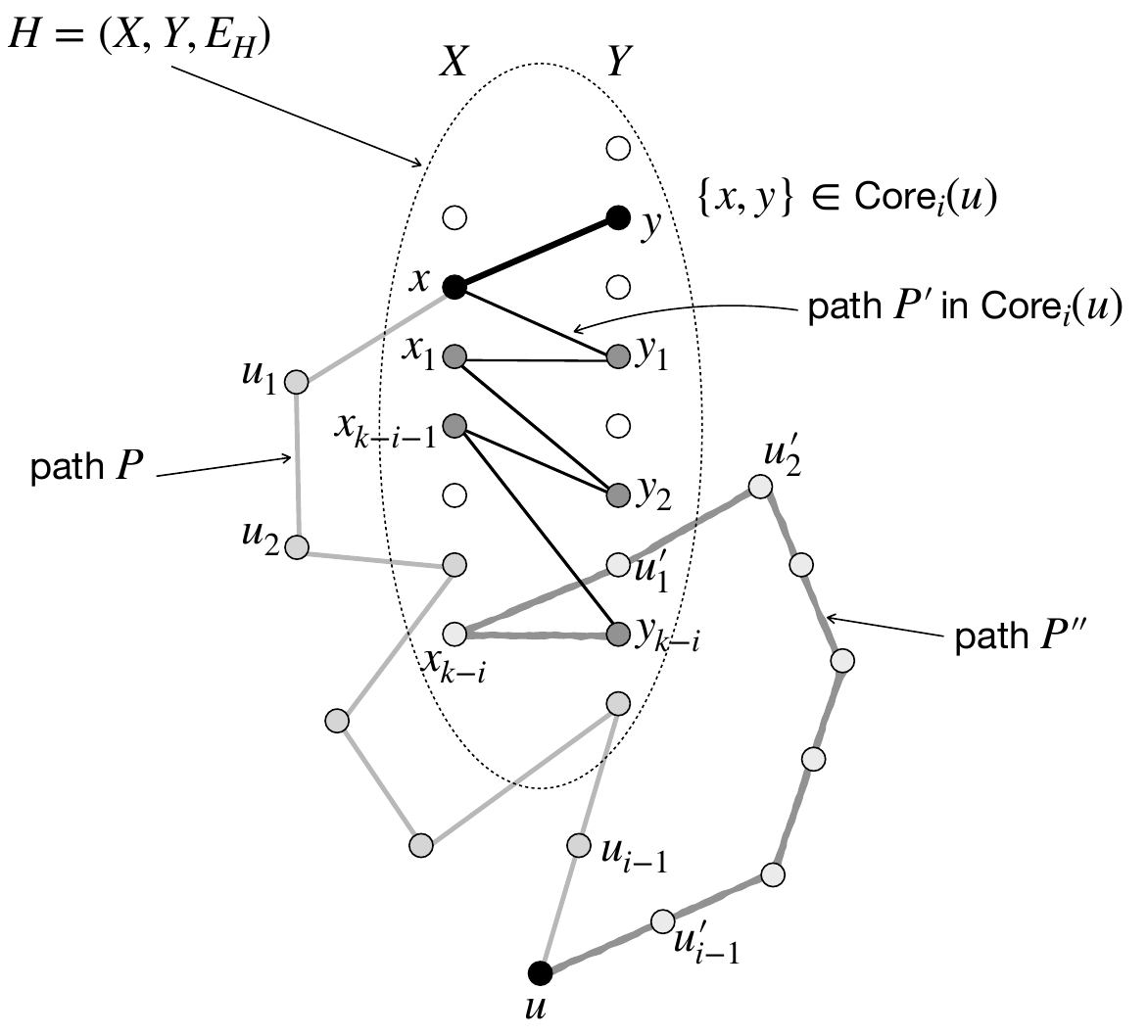}}
    \caption{\sl Construction of the paths $P,P'$, and $P''$ in the proof of Lemma~\ref{lem:cycle}}
    \label{fig:paths-PPP}
\end{figure}

We aim at constructing a path $P'$ in $\mathsf{Core}_i(u)$ that starts at $x_0=x$, and ends at some node $y_{k-i}$, of length $2(k-i)-1$. We build this path iteratively, by increasing its length. 
For the base case, note that $x\in \bp\cap\mathsf{Core}_i(u)$, and thus, by Fact~\ref{fact:deg_ker}, we get that \[\deg_{\mathsf{Core}_i(u)}(x)\geq k>\ell\geq i.\] It follows that there exists 
\[y_1\in N_{\mathsf{Core}_i(u)}(x)\smallsetminus\{u_1,\dots,u_i\}.\] 
The node $y_1$ belongs to $\tp$, and the path $P'$ is initialized to $P'_1=(x_0,y_1)=(x,y_1)$.
For the induction step, let us assume that a path 
\[P'_j=(x_0,y_1,x_1,\dots,y_{j-1},x_{j-1},y_j)\] 
has been constructed, with $x_0=x$ and $1\leq j<k-i$. As $y_j\in \tp\cap \mathsf{Core}_i(u)$, and since, by Fact~\ref{fact:deg_ker},
\[
\deg_{\mathsf{Core}_i(u)}(y_j)\geq k>i+j,
\]
we get that there exists 
\[
x_j\in N_{\mathsf{Core}_i(u)}(y_j)\smallsetminus\{u_1,\dots,u_i,x_0,x_1,\dots,x_{j-1}\}.
\]
We have $x_j\in \bp$. The path $P'_j$ can thus be extended into the path $Q'_j=(x_0,y_1,x_1\dots,y_j,x_j)$. Moreover, as $x_j\in \bp\cap\mathsf{Core}_i(u)$ and 
\[\deg_{\mathsf{Core}_i(u)}(x_j)\geq k>i+j,\]
we get that there exists 
\[y_{j+1}\in N_{\mathsf{Core}_i(u)}(x_j)\smallsetminus\{u_1,\dots,u_i,y_1,\dots,y_j\}.\] 
We have $y_{j+1}\in \tp$. The path $Q'_j$ can thus be extended into the path 
\[P'_{j+1}=(x_0,y_1,x_1,\dots,y_j,x_j,y_{j+1}).\]
We can proceed with the construction until we get a path 
\[P'=P'_{k-i}=(x_0,y_1,x_1,\dots,y_{k-i-1},x_{k-i-1},y_{k-i})\]
of length $2(k-i)-1$, as desired.

We now aim at extending $P\cup P'$ into a $2k$-cycle, by constructing a simple path $P''$ starting at $y_{k-i}$ and ending at $u$ that does not intersect $P\cup P'$. For this purpose, let us consider the set of edges: 
\[
A=\bigcup_{w\in P\cup P'}\bigcup_{j=1}^{i-1} \F_j(y_{k-i},w).
\]
Together, $P$ and  $P'$ contain exactly $2k-i$ vertices. By Fact~\ref{fact:OUT_size}, we have that, for every $j\in\{1,\dots,i-1\}$, \[|\F_j(y_{k-i},w)| \leq (2k)^j,\] 
from which it follows that
\[
|\bigcup_{w\in P\cup P'}\F_j(y_{k-i},w)|\leq (2k-i)(2k)^j.
\]
Therefore, 
\[|A| \leq (2k-i)\cdot \sum_{j=1}^{i-1}(2k)^j.\]
Let us now consider  the set $\bp_{\text{bad}}$ of vertices in $P\cup P'$ that are also in $\bp$, i.e., 
\[
\bp_{\text{bad}}= \bigl(\bp\cap\{u_1,\dots,u_i\}\bigr)\cup\{x_0,\dots,x_{k-i-1}\}.
\]
We have $|\bp_{\text{bad}}| \leq k$, and thus
\begin{align*}
|A|+|\bp_{\text{bad}}|
& \leq (2k-i) \cdot \sum_{j=1}^{i-1}(2k)^j + k \\
& = (2k-i) \cdot \sum_{j=0}^{i-1}(2k)^j - (2k-i) + k \\
& = (2k-i) \cdot \frac{(2k)^i-1}{2k-1} - k+i \\
& \leq ((2k)^i-1) -k +i \\
& < (2k)^i.
\end{align*}
Since $y_{k-i}\in \mathsf{Core}_i(u)$, it follows from Fact~\ref{fact:deg_F} that $|\H_i(y_{k-i},u)|\geq (2k)^i$. As a consequence, 
\[|A|+|\bp_{\text{bad}}|<|\H_i(y_{k-i},u)|.\]
Therefore, there exists $x_{k-i}\in N_{\H_i(u)}(y_{k-i})$ such that $x_{k-i}\notin P\cup P'$, and, for every $j\in\{1,\dots,i-1\}$ and every $w\in P\cup P'$,
\[\{y_{k-i},x_{k-i}\}\notin \F_j(w).\]
By Lemma~\ref{lem:path_to_z}, there exists a path $P''=(u'_0,\dots,u'_i)$ such that $u'_0=x_{k-i}$, $u'_i=u$, and, for every $j\in\{1,\dots,i-1\}$, 
\[\{y_{k-i},x_{k-i}\}\in\F_j(u'_j).\] 
Thus $P''$ does not intersect $P'$, and it  intersects $P$ only at~$u=u_i$. Therefore, the union of the three paths 
\[
P\cup P'\cup P''= (u,u_{i-1},\dots,u_1,x_0,y_1,x_1,\dots,x_{k-i-1},y_{k-i},x_{k-i},u'_1,\dots,u'_{i-1})\]
forms a $2k$-cycle, which concludes the proof. 
\end{proof}

\begin{lemma}\label{lem:ker}
    There exists $i\in\{1,\dots,\ell-1\}$ and $u \in V$ such that $\mathsf{Core}_i(u)\neq\varnothing$.
\end{lemma}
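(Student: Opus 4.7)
The plan is to prove the lemma by contradiction. I will assume that $\mathsf{Core}_i(u) = \varnothing$ for every $i \in \{1,\dots,\ell\}$ and every $u \in V$, and derive a contradiction with Fact~\ref{fact:size_edges}, which gives $|E_H| \geq \tau n / 6 = ((2k)^\ell + \ell k)\,n$. The strategy is to focus on the single vertex $u = v$ and sandwich $|\F_\ell(v)|$ between two incompatible bounds, one coming from the initial mass $|E_H|$ propagated from $\bp$, the other from the filtering cap of Fact~\ref{fact:OUT_size}.

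For the lower bound, I fix an arbitrary $x \in \bp$. Since $\bp \subseteq R(v) = R_\ell(v)$, reversibility of simple paths in $G$ gives a simple path $(x = w_0, w_1, \dots, w_\ell = v)$ of length $\ell$. I claim that $|\F_i(x, w_i)| \geq |E_H(x)| - i(k-1)$ for every $i \in \{1,\dots,\ell\}$, which I would prove by induction. The base case uses $\F_0(x) = E_H(x)$ together with Fact~\ref{fact:OUT_to_IN} (so $E_H(x) \subseteq \H_1(w_1)$, hence $E_H(x) \subseteq \H_1(x, w_1)$), combined with Fact~\ref{fact:size_IN} applied with $\deg_{\mathsf{Core}_1(w_1)}(x) = 0$. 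The induction step is analogous: $\F_{i-1}(w_{i-1}) \subseteq \H_i(w_i)$ by Fact~\ref{fact:OUT_to_IN} yields $\F_{i-1}(x, w_{i-1}) \subseteq \H_i(x, w_i)$, and Fact~\ref{fact:size_IN} (with the core empty) gives $|\F_i(x, w_i)| \geq |\H_i(x, w_i)| - (k-1) \geq |\F_{i-1}(x, w_{i-1})| - (k-1)$.

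Then I would sum over $x \in \bp$. Because $H$ is bipartite, each edge has a unique endpoint in $\bp$, so the sets $\F_\ell(x, v)$ partition $\F_\ell(v)$ and the sets $E_H(x)$ partition $E_H$; hence
\[
|\F_\ell(v)| = \sum_{x \in \bp} |\F_\ell(x, v)| \geq |E_H| - |\bp|\,\ell\,(k-1).
\]
In the opposite direction, Fact~\ref{fact:OUT_size} yields $|\F_\ell(v)| = \sum_{y \in \tp} |\F_\ell(y, v)| < |\tp|\,(2k)^\ell$. Combining both inequalities and using $|\bp|, |\tp| \leq n$, I obtain $|E_H| < n\,(2k)^\ell + n\,\ell\,(k-1) \leq n\bigl((2k)^\ell + \ell k\bigr)$, contradicting Fact~\ref{fact:size_edges}. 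Therefore some $\mathsf{Core}_i(u)$ must be non-empty, which together with Lemma~\ref{lem:cycle} produces a $2k$-cycle, completing the proof of Theorem~\ref{lem:density}.

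The main obstacle is obtaining the per-step loss bound of $k-1$ (per initial vertex $x$), rather than the much weaker $(k-1)|\bp|$ one would get by summing Fact~\ref{fact:size_IN} over all of $\bp$ at once. The coarse version would carry an extra factor of $|\bp|$ through the path, and the final upper bound on $|E_H|$ would degrade to $n^2 \bigl((2k)^\ell + \ell k\bigr)$, which is useless. The refinement that unlocks the correct bound is precisely to track $\F_i(x, \cdot)$ one $x$ at a time along the chosen path from $x$ to $v$: the per-$x$ form of Fact~\ref{fact:size_IN} gives the sharp loss bound, and the factor $|\bp|$ reappears only harmlessly when summing across $x \in \bp$, where it is absorbed into the $|E_H|$ term.
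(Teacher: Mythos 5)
Your proof is correct and follows essentially the same argument as the paper's: assume every core is empty, track $|\F_i(x,w_i)|$ along a path from $x$ to $v$ via Facts~\ref{fact:OUT_to_IN} and~\ref{fact:size_IN} (losing at most $k-1$ per step), and combine the resulting lower bound on $|\F_\ell(v)|$ with the cap from Fact~\ref{fact:OUT_size} to contradict Fact~\ref{fact:size_edges}. You even state the contradiction hypothesis with the range $i\in\{1,\dots,\ell\}$, which is what the final application of Fact~\ref{fact:size_IN} at $i=\ell$ actually needs (the paper's stated range $\{1,\dots,\ell-1\}$ appears to be an off-by-one slip).
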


\begin{proof}
    The proof goes by contradiction. Let us assume that, for every $i\in\{1,\dots,\ell-1\}$ and every $u \in V$, we have $\mathsf{Core}_i(u) = \varnothing$. We are going to show that this implies $|E_H|< \tau n/6$, contradicting Fact~\ref{fact:size_edges} (recall that $\tau$ was defined in Equation~\ref{eq:tau}).
    Let $x\in \bp\subseteq R(v)$. By definition of $R(v)$, there exists a simple path $(u_0,\dots,u_\ell)$ of length $\ell$ between $u_\ell=v$ and~$u_0=x$ in~$G$. By definition, $\mathcal{Q}_0(x)=(x)$, and $\F_0(x,x)=E_H(x)$. For establishing the contradiction, let us revisit the recursive construction of the sets $\F_i(x,u_i)$ for $i=1$ to~$\ell$. 
    By Fact~\ref{fact:OUT_to_IN}, for all $i\in\{1,\dots,\ell\}$, we have $|\H_i(x,u_i)| \geq |\F_{i-1}(x,u_{i-1})|$. Thus, since 
     $\mathsf{Core}_i(u)=\varnothing$, Fact~\ref{fact:size_IN} yields that
    \[|\F_i(x,u_i)|\geq |\H_i(x,u_i)|-(k-1)\geq |\F_{i-1}(x,u_{i-1})|-(k-1).\]
    Therefore, we get that, for every $x\in \bp$ and $i\in \{1,\dots,\ell\}$, 
    \[|\F_i(x,u_i)| \geq |\F_0(x,x)|-i(k-1) = |E_H(x)|-i(k-1).\] 
    As $\bp$ is one of the two parts of the bipartite graph $H$, Fact~\ref{fact:OUT_size} implies  that
    \begin{align*}
        |E_H|
        &=\sum_{x\in \bp}|E_H(x)|\\
        &\leq \sum_{x\in \bp}\Big(|\F_\ell(x,u)|+\ell(k-1)\Big)\\
        &= |\F_\ell(u)|+\sum_{x\in \bp}\ell(k-1)\\
        &=\sum_{y\in \tp}|\F_\ell(y,u)|+\ell(k-1)\cdot |\bp|\\
        &< (2k)^\ell\cdot|\tp|+\ell(k-1)\cdot|\bp|\\
        &<(2k)^\ell\cdot(|\tp|+|\bp|)\\
        &\leq (2k)^\ell\cdot n\\
        &= \frac{\tau}{6}\cdot n
    \end{align*}
The latter inequality is the desired contradiction.  
\end{proof}

\begin{proof}[Proof of Theorem~\ref{lem:density}]
    By Lemma~\ref{lem:ker}, there exists $i\in\{1,\dots,\ell-1\}$ and $u\in V$ such that $\mathsf{Core}_i(u)\neq\varnothing$. The existence of a $2k$-cycle can then be concluded with Lemma~\ref{lem:cycle}.
\end{proof}

\section{Conclusion}\label{sec:conclu}

We have proved that, for every $k\geq 2$, there exists a \emph{deterministic} distributed algorithm that decides whether the input graph~$G$ is $C_{2k}$-free in $O(n^{1-\nicefrac{1}{k}})$ rounds under the \CONGEST\/ model. This result is based on a new result in graph theory, which essentially states that when some form of  ``local density'' exceeds a certain threshold (that depends on~$k$) in a graph, that graph must contain a $2k$-cycle. 

We point out that our deterministic algorithm for $C_{2k}$-freeness can be used to solve the same problem in the Quantum $\CONGEST$ model in $\tilde{O}(n^{\nicefrac{1}{2}-\nicefrac{1}{2k}})$ rounds, following the same approach as in~\cite{FraigniaudLMT24}. Informally, the approach consists in three steps : (1) apply the “diameter reduction” technique of~\cite{eden2022sublinear}, which allows to reduce the problem to graphs of polylogarithmic diameter, (2) operate a “congestion reduction” on Algorithm~\ref{alg_C2k} to make it work in a constant number of $\CONGEST$ rounds, at the cost of reducing the probability for cycle detection to $\Theta(n^{-1+\nicefrac{1}{k}})$, and (3) eventually use an “amplification technique” based on quantum Grover search to obtain a constant probability of detecting a $C_{2k}$, if exists, in $\tilde{O}(n^{\nicefrac{1}{2}-\nicefrac{1}{2k}})$ rounds. The same round complexity was already attained in~\cite{FraigniaudLMT24}, but the new algorithm is considerably simpler.

Note that the constant $\lebontau$ in our density theorem is not tight, at least for some values of $k$ and $\ell$. For instance, for $k=2$, and $\ell=1$, our theorem states that if there exists $v\in V$ such that $\sum_{u\in N(v)}\deg(u) > 24\cdot n$, then there is a $4$-cycle in $G$ whereas, in fact, there is a $4$-cycle in $G$ already when $\sum_{u\in N(v)}\deg(u) > 2\cdot n$. We let as an open problem the determination, for every $k\geq 2$ and $\ell\in\{1,\dots,k-1\}$, of the smallest value $\tau=\tau(k,\ell)$ such that the existence of a node~$v$ for which $\sum_{u\in R_\ell(v)}\deg(u) > \tau \cdot n$ implies the existence of a $2k$-cycle. 

Arguably one of the main open problems in the field of distributed subgraph detection under the \CONGEST\/ model is whether $O(n^{1-\nicefrac{1}{k}})$ is the best round-complexity that can be achieved for $2k$-cycle detection, whether it be by deterministic or randomized algorithms, up to polylogarithmic factors. It is known to be the case for $k=2$, i.e., the round-complexity of $C_4$-freeness is $\tilde\Omega(\sqrt{n})$, but it is open for $k>2$. In other words, is it true that, for every $k\geq 2$, $C_{2k}$-freeness cannot be decided under the \CONGEST\/ model in $\tilde o(n^{1-1/k})$ rounds?

\bibliography{biblio-cycle}
\end{document}